\newtheorem{theorem}{Theorem}[section]
\newtheorem{proposition}[theorem]{Proposition}
\newtheorem{corollary}[theorem]{Corollary}
\newtheorem{conjecture}[theorem]{Conjecture}
\newenvironment{proof}[1][Proof]{\begin{trivlist}
\item[\hskip \labelsep {\bfseries #1}]}{\end{trivlist}}
\newenvironment{definition}[1][Definition]{\begin{trivlist}
\item[\hskip \labelsep {\bfseries #1}]}{\end{trivlist}}
\newcommand{\qed}{\nobreak \ifvmode \relax \else
      \ifdim\lastskip<1.5em \hskip-\lastskip
      \hskip1.5em plus0em minus0.5em \fi \nobreak
      \vrule height0.75em width0.5em depth0.25em\fi}
\begin{document}

\title{Extending 1089 attractor to any number of digits \\
and any number of steps 
\vspace{0.2in}
\author{
Yannis Almirantis$^1$ and
Wentian Li$^{2,3}$ \\
{\small  1. Theoretical Biology and Computational Genomics Laboratory, Institute of Bioscience and Applications}\\
{\small National Center for Scientific Research ``Demokritos", Athens, Greece}\\
{\small  2. Department of Applied Mathematics and Statistics, Stony Brook University, Stony Brook, NY, USA}\\
{\small  3. The Robert S. Boas Center for Genomics and Human Genetics}\\
{\small  The Feinstein Institutes for Medical Research, Northwell Health, Manhasset, NY, USA}\\
\\
}
\date{}
}  
\maketitle                   
\markboth{\sl }{\sl }

\normalsize

\vspace{-0.2in}

\begin{center}
{\bf ABSTRACT}
\end{center}

The well-known ``1089" trick reflects an amazing trait of digital reversal process
and reminisces of a limiting attractor in dynamical systems even though
it takes only two steps. It is natural to consider the situations when
the number of digits is beyond three as in the original 1089 trick, 
as well as situations when the number of steps is beyond two.
The generalization to a larger number of digits has been mostly done
by Webster's work which we will reproduce. After the two steps
of the ``1089" trick  for any number of digits,
the resulting integers, the number of which is very low,
are named here ``Papadakis-Webster integers" (PWI).
A PWI is always divisible by 99, and the resulting quotients consist of only
0's and 1's, which we name ``Papadakis-Webster binary strings" (PWBS).
Not all binary strings could be PWBS, and we 
define the ``hairpin pairing rule"
to determine if a binary string is a PWBS. To generalize 1089 trick
to any number of steps, we propose a two-option iteration procedure
named ``iterative digital reversal" (IDR) suitably interweaving additions and subtractions. 
The simplest limiting behavior of IDR
is 2-cycles. The elements in an IDR 2-cycle are all 
composed of repetitions of the 10(9)$_L$89 
($L \ge 0$) motif, and are all PWIs. The lower 2-cycle
elements after division of 99 belong to the subset of PWBS that 
are palindromic and consist of
0- and 1-blocks with a minimal length of two. IDR also has longer  $p$-cycles
($p=$ 10, 12, 71) whose elements seem to contain at least one PWI. 
Another interesting finding about IDR is that it contains non-periodic
and diverging trajectories, as the integer values grow to infinity.
In these diverging trajectories, while the number of flanking digits around the middle 
point increases by the iteration, the middle part
has an 8-cycle rhythm or signature which has been found in all diverging trajectories.
Overall, the generalization of the original 1089 trick in both ``space"
and ``time" leads to many new patterns in integers and new phenomenology in dynamics.

\newpage

\large

\section{Introduction}

\indent

Prof.Acheson recalled reading about the ``1089 trick" \citep{ball} as a 10 years old
from the I-SPY magazine in the 1950s' England \citep{acheson}: think of any integer
with three digits, making sure the first and the last digits differ by at least two
(e.g., 782); reverse the digits and take the absolute difference (e.g., 782-287=495); 
reverse that new integer and take the sum (495+594); the end result is always 1089.
What causes this uniqueness in the final answer, or in a 
physicist's jargon, the universality, regardless of the differences
in the initial integer  \citep{stanley,deift}?

For those who have heard of the 1089 trick, there 
are new questions raised. Among them: 
what about an initial integer with 4-digit, 5-digit, or any number of digits -- 
will the end result of the above two operations still be unique?  What are 
the other integers besides 1089 that behave like 1089?  What if the first 
digit and the last digit are equal or only differ by one?  What if 
the two operations are extended to any number of operations?

In this paper, we try to provide a relatively complete set of answers to these questions.
The extension of 1089 trick to arbitrary number of digits was
essentially answered by Roger Webster, in a probably less known paper \citep{webster}.

In an even earlier work, Constantinos Papadakis, a Greek engineer and inventor, also 
attempted to generalize the 1089 trick to any number of digits \citep{papadakis}.
He described the properties of what we call here Papadakis-Webster numbers and binary 
strings and emphasized the significance of their very low populations.  
Still, his proofs were somewhat amateurish and incomplete. His work is even less known 
because his monograph is not in English.
For historical reasons, we  made
his monograph freely available on the Internet, with an introduction in English,
at 
https://shorturl.at/ILGdC.

Webster's and Papadakis's results will be reproduced here.  In order to extend the 1089 trick
to any number of steps, we first need to design a rule concerning when to
use subtraction and when to use addition. Our rule is the following: if
the reversed integer is smaller, subtracting the two; if the reversed integer
is larger (or equal), adding the two.

An operation that is repeated an infinite number of steps can be considered
as a dynamical system. For the 3-digit situation discussed above, the step
after reaching 1089 is an addition according to our rule, because the reversed
integer is larger, then next integer is 1089+9801=10890. The step after reaching 10890 is
a subtraction because the reversed integer is smaller, then the next integer
is 10890-9801= 1089, back to the original 1089.  No matter how many steps we would take further,
the system is settled on the 1089-10890 2-cycle ``attractor".  All 3-digit 
integers where the first digit is larger than or equal to the last digit 
by 2 are  part of  ``basin of attraction" of the 1089-10890 attractor.  

Because all 3-digit integers are in the basin of attraction -- none are 
attractor themselves (1089 and 10890 are 4 and 5-digit integers),
for 1089 trick,
the state space of all integers collapse to a very small subset by our operation.
There is a strong constraint on what the integers in the attractor
would look like. Two important constraints on the integers after two steps
were discovered by Webster: (1) these are divisible by 99; (2) after divided
by 99, the quotient is a binary string (consisting of 0s and 1s only).
In fact, this binary string is related to the digital-borrowing sequence
during the first subtraction step. 
We will examine to what degree these are still true
when we extend the 1089 trick to any number of steps.

To our surprise, 2-cycle attractors are not the only possible end results
of our operation with infinite number of steps. 
Other longer cycles are possible. Also, there are
even non-periodic (acyclic) behaviors:
the integers become larger and larger as we continue to iterate
the mapping, even though subtraction part of the operation is always used. 
There are constraints on the integers in the diverging trajectories.
There are also 8-cycle rhythm or ``signature" in 
the digital sequence of these non-attractor integers.
All these results are well beyond the original 1089 trick and Webster's work.
 
The paper is organized as follows: 
section 2 is on extending 1089 trick 
to larger integers with more than 3 digits, principally studying
the particular properties of the (very few) end-integers of the procedure. 
It has two subsections.
One is about the 
produced end-integers themselves, the Papadakis-Webster integers.  The 
other is about the Papadakis-Webster integers (PWI) divided by 99, which always
end up to integers that consist of 0's and 1's only, called Papadakis-Webster
binary strings (PWBS). 
Section 3 is about extending 1089 trick to an infinite number of
steps which contains five subsections. The first subsection introduces the
iterative digital reversal (IDR) mapping and the general description of
its dynamical behavior. The second and the third subsections describe 
the 2-cycle attractors of IDR and its connection to the Papadakis-Webster integers.
The fourth and fifth subsections are about higher cyclic attractors, and
acyclic diverging trajectories. The paper ends with the Discussion section. 
The Appendices contain discussion on a caveat in the 1089 trick,
on the requirement that the number of digits after subtraction has to remain
the same; more examples of PWI not listed in the main text; an example
of a 71-cycle; and a short discussion on the 1089 trick and IDR beyond
decimal numerical system. 

\section{Extension of 1089 trick to integers with any number of digits}

\subsection{Introducing the Papadakis-Webster integers and Papadakis-Webster
binary strings}

\indent

\begin{definition}
{\bf (Digital reversal):}
For any integers with $n+1$ digits, $D=\sum_{i=0}^n a_i 10^i=(a_na_{n-1}\cdots a_2a_1a_0)$,
its digital reversal $rev(D)$ is defined as $rev(D)\equiv \sum_{i=0}^n a_{n-i} 10^i 
=(a_0a_1a_2 \cdots a_{n-1}a_n)$.
\end{definition}

\begin{definition}
\label{def-PWI}
{\bf (Papadakis-Webster integers (PWI) ):} For a (n+1)-digit integer $D$, assume 
(1) $D > rev(D)$, (2) $D$ and $D-rev(D)$ have
the same number of digits;  then the following two steps are carried out: 
(1) $E=D-rev(D)$, and (2) $F=E+rev(E)$; 
The end result F is defined as a Papadakis-Webster integer (PWI). 
\end{definition}
Note: $a_n >   a_0+1$ is a sufficient, but not necessary, condition 
for $D$ and $E=D-rev(D)$ having the same number of digits.
When $a_n=a_0+1$, $E$ may or may not have the same number of digits as
$D$, and $F$ may or may not be a PWI.

\begin{definition}
{\bf (Digital borrow and carryover sequence):} In subtracting $rev(D)$ from $D$,
the digital borrow sequence $\{ b_i \}$ is defined as the binary indicator:
$b_i =1$ if the subtraction at position $i$
borrows from position $i+1$, and $b_i=0$ if not.
Similarly, in adding $D$ and $rev(D)$, the digital carryover sequence $\{ c_i \}$
is defined by $c_i$ if the summation at position $i$ is larger than 10,
and $c_i=0$ if not. The $\{ b_i \}$ and $\{ c_i\}$ sequences satisfy these
relations:
\begin{equation}
\label{eq-borrow}
b_i = \left\{ 
\begin{matrix}
1 & \mbox{if $a_i-a_{n-i} -b_{i-1} <  0$} \\
0 & \mbox{if  $a_i-a_{n-i} -b_{i-1} \ge 0$ } \\
\end{matrix}
\right.
\end{equation}
\begin{equation}
\label{eq-carry}
c_i = \left\{ 
\begin{matrix}
1 & \mbox{if $a_i+a_{n-i} +c_{i-1} \ge 10$} \\
0 & \mbox{if  $a_i+a_{n-i} +c_{i-1} < 10$ } \\
\end{matrix}
\right.
\end{equation} 
\end{definition}
Usually $b_i$ and $b_{n-i}$ have different values expect for some special
situations.  Similarly, $c_i$ and $c_{n-i}$ usually should have the same 
value unless $a_i+a_{n-i}=9$.  

\begin{proposition}
\label{prop-borrow}
A Papadakis-Webster integer only contains the digit-borrowing information
during the $D-rev(D)$ step, and does not contain information about the original
digits $\{ a_i \}$.
\end{proposition}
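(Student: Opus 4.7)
The plan is to derive a closed-form expression for $F = E + rev(E)$ whose coefficients depend only on the borrowing sequence $\{b_i\}$, so that the original digits $a_i$ cancel in pairs.

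First, I would record the schoolbook subtraction identity underlying Eq.~(\ref{eq-borrow}). With the conventions $b_{-1}=0$ and $b_n=0$ (the latter holding because Definition~\ref{def-PWI} requires $D>rev(D)$ with $D$ and $E$ having the same number of digits, so no borrow exits the top position), the $i$-th digit of $E=D-rev(D)$ is
\begin{equation*}
e_i \;=\; a_i - a_{n-i} - b_{i-1} + 10\,b_i.
\end{equation*}
A quick consistency check: summing $\sum_i e_i\,10^i$ makes the borrow terms telescope across positions $-1$ and $n$, recovering $D-rev(D)$ exactly.

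Second, viewing $F$ as an arithmetic sum rather than a digit-by-digit sum, one has
\begin{equation*}
F \;=\; \sum_{i=0}^n e_i\,10^i \;+\; \sum_{i=0}^n e_{n-i}\,10^i \;=\; \sum_{i=0}^n (e_i + e_{n-i})\,10^i.
\end{equation*}
Substituting the formula for $e_i$, the $a_i$ and $a_{n-i}$ terms cancel in pairs, leaving
\begin{equation*}
e_i + e_{n-i} \;=\; 10(b_i + b_{n-i}) \,-\, (b_{i-1} + b_{n-i-1}),
\end{equation*}
so that $F = \sum_{i=0}^n \bigl[10(b_i+b_{n-i}) - (b_{i-1}+b_{n-i-1})\bigr]\,10^i$ is expressible solely in terms of $\{b_i\}$, which is precisely the content of the proposition.

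The only delicate point to flag is the distinction between ``decimal digit'' and ``arithmetic coefficient'': the $e_i$ are bona fide digits of $E$ only because of the borrow recursion, but once that is established, $F = E + rev(E)$ is just ordinary integer addition, in which any carries of the second step are absorbed automatically into the $10^i$ expansion. Consequently the carry sequence $\{c_i\}$ of the addition step never appears in the derivation --- it is itself determined by $\{b_i\}$ as a byproduct --- and the boundary values $b_{-1}=0$, $b_n=0$ eliminate the edge terms that could otherwise spoil the cancellation. I do not anticipate a serious obstacle here; the whole argument is essentially a one-line cancellation once the correct digit-level formula for $e_i$ is in hand.
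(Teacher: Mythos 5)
Your proposal is correct and follows essentially the same route as the paper's own proof: both express the digits of $E$ as $e_i = a_i - a_{n-i} - b_{i-1} + 10 b_i$ (with $b_{-1}=b_n=0$), write $rev(E)=\sum_i e_{n-i}10^i$, and observe that the $a$-terms cancel in pairs, leaving $F=\sum_{i=0}^n\bigl[10(b_i+b_{n-i})-(b_{i-1}+b_{n-i-1})\bigr]10^i$, which is exactly the paper's Eq.~(\ref{eq-F}). Your added remark that the carry sequence of the addition step is absorbed into the $10^i$ expansion and never enters the derivation is a sensible clarification but does not alter the argument.
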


\begin{proof}
It is easy to check that the following formula:
\begin{eqnarray}
\label{eq-E}
E &=& D-rev(D) \nonumber \\
 &=& \sum_{i=0}^n a_i 10^i - \sum_{i=0}^n a_{n-i} 10^i \nonumber \\
 &=& \sum_{i=0}^n (a_i - a_{n-i} +10 {b_i} - b_{(i-1)}) 10^i 
\end{eqnarray}
contains the terms that correspond to the necessary borrowing
operations for all digits for $D-rev(D)$, 
as a borrowing at position $i$ will increase the value at position $i$ by 10,
and at the same time, a borrowing at position $i-1$ will decrease the value at position $i$
by 1. We define $b_{(-1)}=0$. Also $b_n=0$ is always true because our assumption that $a_n > a_0+1$.
The same assumption also ensures that the leading digit of E can not be zero. In other
words, $E$ has the same length n+1 as $D$.

Next,
\begin{eqnarray}
\label{eq-F}
F &=& E+rev(E) \nonumber \\
 &=& \sum_{i=0}^n (a_i - a_{n-i} +10 b_{i} - b_{(i-1)}) 10^i
+ \sum_{i=0}^n (a_{n-i} - a_{i} +10 b_{(n-i)} - b_{(n-i-1)}) 10^i \nonumber \\
&=& \sum_{i=0}^n (10 b_{i} - b_{(i-1)} + 10 b_{(n-i)} - b_{(n-i-1)}) 10^i 
\end{eqnarray}
Since F does not contain information on $\{ a_i \}$, but only information in digit-borrowing
during $D-rev(D)$, proposition \ref{prop-borrow} has been proven.
$\blacksquare$
\end{proof}

Proposition \ref{prop-borrow} explains why information concerning the original
digits $\{ a_i \}$ has been lost, and only partial information
on which digit is larger than which other is kept. This great reduction on
the detailed information is the basis for universality in 1089 trick.

\begin{theorem}
\label{theorem-99}
A Papadakis-Webster integer (PWI) is divisible by 99, and the quotient
is a binary string (digits can only be 0 and 1s). 
\end{theorem}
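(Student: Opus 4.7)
The plan is to take the formula (\ref{eq-F}) for $F$ already established in Proposition \ref{prop-borrow} and simplify it by carefully bookkeeping the index ranges of the two nested sums. First I would split
$$F = \sum_{i=0}^n (10 b_i - b_{i-1})\cdot 10^i \;+\; \sum_{i=0}^n (10 b_{n-i} - b_{n-i-1})\cdot 10^i$$
into two pieces and handle each separately, using the boundary values $b_{-1} = 0$ (by convention) and $b_n = 0$ (guaranteed because the PWI hypothesis $a_n > a_0 + 1$ rules out borrowing at the top).

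For the first piece, I would shift the index in the $10 b_i \cdot 10^i$ sub-sum (setting $j = i+1$) so it becomes $\sum_{j=1}^{n+1} b_{j-1}\cdot 10^j$, then use $b_n = 0$ to trim the $j=n+1$ term. The result lines up term-by-term with $\sum_{i=1}^n b_{i-1}\cdot 10^i$ (the $i=0$ term of the second sub-sum vanishing because $b_{-1}=0$), so this piece telescopes to zero. One can sanity-check on $D = 782$, where $(b_0, b_1, b_2) = (1,1,0)$ gives $10 + 90 - 100 = 0$.

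For the second piece, the key move is the substitution $k = n - i$, which recasts it as $\sum_{k=0}^n (10 b_k - b_{k-1})\cdot 10^{n-k}$. After splitting and the same sort of index shift, again using $b_n = 0$ and $b_{-1} = 0$ to trim endpoints, the coefficient of each $b_k$ collects into $10^{n-k+1} - 10^{n-k-1} = 99\cdot 10^{n-k-1}$. Both conclusions of Theorem \ref{theorem-99} then fall out at once: the factor $99$ pulls out of the sum, giving divisibility, and the quotient is
$$\frac{F}{99} \;=\; \sum_{k=0}^{n-1} b_k \cdot 10^{n-k-1},$$
an integer in $\{0,\ldots,10^n-1\}$ whose $j$-th decimal digit is exactly $b_{n-j-1} \in \{0,1\}$. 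Since no $b_k$ exceeds $1$, no carries arise when interpreting the sum in base ten, so the quotient literally is the binary string $b_0 b_1 \cdots b_{n-1}$.

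The main subtlety, rather than a genuine obstacle, is the bookkeeping of boundary terms: the identities $b_{-1} = 0$ and $b_n = 0$ are what make the first piece telescope cleanly and what keep the second piece's quotient within $[0, 10^n-1]$ so that its decimal digits are \emph{literally} the $b_k$'s. The hypothesis $a_n > a_0 + 1$ in Definition \ref{def-PWI} is exactly what enforces $b_n = 0$, tying the algebra back to the definition of a PWI and also giving, as a bonus, an interpretation of the quotient as the reversed borrow sequence of $D-rev(D)$.
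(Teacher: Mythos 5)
Your proposal is correct and follows essentially the same route as the paper's own proof: the first half of Eq.~(\ref{eq-F}) telescopes to zero using $b_{-1}=b_n=0$, and the second half collapses to $99\sum_{i=0}^{n-1} b_{n-i-1}10^i$ after an index shift, with the binary-digit claim immediate since each $b_k\in\{0,1\}$. Your only additions are cosmetic --- the substitution $k=n-i$ before shifting, and the explicit remark that no decimal carries occur --- so the argument matches the paper's.
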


\begin{proof}
After summation, the first two terms in Eq.(\ref{eq-F}) 
result to a complete mutual annihilation for all indices
(note the re-indexing in the summation, and  $b_{(-1)}=0, b_n=0$) :
\begin{equation}
\sum_{j=1}^{n+1} 10 b_{(j-1)} 10^{j-1} - \sum_{i=0}^n  b_{(i-1)} 10^i
= \sum_{j=1}^{n+1} b_{(j-1)} 10^j - \sum_{i=0}^n  b_{(i-1)} 10^i=0 \nonumber
\end{equation}
Then, the next two terms in  Eq.(\ref{eq-F}) can be rewritten as
(again, note the re-indexing in the summation):
\begin{eqnarray}
\label{eq-binary}
F &=& \sum_{j=-1}^{n-1} 10 b_{(n-j-1)} 10^{j+1} - \sum_{i=0}^n b_{(n-i-1)} 10^i \nonumber \\
& =&  \sum_{j=-1}^{n-1} 100 \cdot b_{(n-j-1)} 10^{j} - \sum_{i=0}^n b_{(n-i-1)} 10^i \nonumber \\
&=& 99 \cdot \sum_{i=0}^{n-1} b_{(n-i-1)} 10^i \blacksquare
\end{eqnarray}
\end{proof}

\begin{corollary}
A Papadakis-Webster integer consists of only digits 0, 1, 8, and 9s.
\end{corollary}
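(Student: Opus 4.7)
The plan is to reduce the claim to a statement about multiplying a binary string by $99$, which is immediate from Theorem \ref{theorem-99}: any PWI $F$ equals $99 \cdot Q$ where the decimal digits $q_0, q_1, \ldots, q_{m-1}$ of $Q$ all lie in $\{0,1\}$. Hence the corollary follows once I show that for any such $Q$, each decimal digit of $99 Q$ belongs to $\{0,1,8,9\}$.

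To see this, I would rewrite $99 Q = 100 Q - Q$ and perform the long subtraction column-by-column. The minuend $100 Q$ has digits $0, 0, q_0, q_1, \ldots, q_{m-1}$ at positions $0, 1, 2, \ldots, m+1$, while the subtrahend $Q$ has digits $q_0, q_1, \ldots, q_{m-1}, 0, 0$ at those same positions. Thus the column at position $i \ge 2$ involves the two bits $q_{i-2}$ and $q_i$, while at positions $i = 0,1$ the minuend digit is $0$ and the subtrahend digit is $q_i$. Setting $q_{-2} = q_{-1} = 0$ lets me treat all columns uniformly.

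The key observation is that since both operand bits are at most $1$, the quantity $q_{i-2} - q_i - \beta_{i-1}$ (where $\beta_{i-1} \in \{0,1\}$ is the incoming borrow) lies in $\{-2, -1, 0, 1\}$. Consequently the outgoing borrow $\beta_i$ remains in $\{0,1\}$ and the displayed column digit lies in $\{-2+10,\ -1+10,\ 0,\ 1\} = \{8, 9, 0, 1\}$, exactly the claimed set. Enumerating the eight $(q_{i-2}, q_i, \beta_{i-1})$ combinations verifies this and pins down the borrow recursion (for instance $(1,1,0)$ gives digit $0$ with no outgoing borrow, while $(1,1,1)$ gives digit $9$ with outgoing borrow $1$).

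The only mild obstacle is the bookkeeping at the two ends of the string: one must check that the final outgoing borrow vanishes, so no spurious negative overflow appears, and that the high-order column produces a valid leading digit. Both follow from the a priori bound $F = 99Q < 100 \cdot 10^m = 10^{m+2}$, guaranteeing that the subtraction terminates cleanly within $m+2$ columns. No further argument is needed; the corollary is essentially a transparent consequence of Theorem \ref{theorem-99} together with the identity $99 = 100-1$.
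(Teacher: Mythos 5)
Your proof is correct, but it takes a different route from the paper's. The paper expands $F = 99\sum_i b_i 10^i$ as a sum of copies of $99$ shifted to various positions, and then argues (largely by worked examples such as $99+990=1089$, $99+9900=9999$, $99+990+9900=10989$) that combining shifted $99$'s never produces a digit outside $\{0,1,8,9\}$; the justification that ``no other types of digits are created'' when many shifted copies pile up is left informal. You instead use the identity $99Q = 100Q - Q$ and run a single two-row long subtraction, observing that each column computes $q_{i-2}-q_i-\beta_{i-1}\in\{-2,-1,0,1\}$ with binary operands and a borrow that provably stays in $\{0,1\}$, so every output digit lies in $\{8,9,0,1\}$. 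Your version is the more systematic of the two: it replaces the paper's multi-addend carry bookkeeping with a uniform three-variable case check, and it handles an arbitrary binary $Q$ in one pass rather than by enumerating overlap patterns. The paper's version has the virtue of making visible \emph{why} the digits $1,0,8,9$ appear (they are literally the digits of $1089 = 99+990$), which is thematically tied to the rest of the paper, but as a proof yours is tighter. One small point worth stating explicitly in your write-up: the final borrow vanishes because $100Q>Q$, which you do note via the bound $99Q<10^{m+2}$; that closes the only loose end.
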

As can be seen from Eq.\ref{eq-binary}, a Papadakis-Webster integer consists of adding
99s shifted by arbitrary number of digit positions. If two 99s are shifted by one position,
their sum is 99+990=1089; if shifted by two positions 99+9900=9999; if shifted by more
than $k \ge 2$ positions, the sum is 99$(0)_{k-2}$99. 
In the case of adding (e.g.) three shifted 99s, (e.g.) 
99+990+9900=1089 +9900= 10989, no other types of digits are created.  Combining 
all these possibilities, a Papadakis-Webster integer only consists of 0, 1, 8 and 9s.

\begin{definition}
{\bf (Papadakis-Webster binary string (PWBS)):} 
A Papadakis-Webster binary string is the quotient of a Papadakis-Webster integer divided by 99.
\end{definition}

\begin{corollary}
A Papadakis-Webster binary string for integer $D$ is the reverse of the 
digital-borrowing binary indicator sequence for $D-rev(D)$, excluding the leading digit.
\end{corollary}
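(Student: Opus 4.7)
The plan is to read the conclusion off directly from Eq.~(\ref{eq-binary}) in the proof of Theorem~\ref{theorem-99}, treating the corollary as a careful bookkeeping statement about which borrow bit ends up at which decimal position of $F/99$.

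First, I would take as a starting point the identity $F = 99 \cdot \sum_{i=0}^{n-1} b_{n-i-1}\, 10^i$ already established in Eq.~(\ref{eq-binary}). Dividing by 99 gives the Papadakis-Webster binary string explicitly as
\[
\text{PWBS}(D) \;=\; \sum_{i=0}^{n-1} b_{n-i-1}\, 10^i,
\]
so the coefficient sitting at decimal position $i$ is $b_{n-i-1}$. Writing the PWBS in the usual left-to-right convention (most significant digit first), the displayed digit string is $b_0\, b_1\, b_2 \cdots b_{n-1}$.

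Next, I would set this against the digital-borrowing sequence $\{b_i\}_{i=0}^{n}$ for $D-rev(D)$. Written in positional order (highest position on the left, matching how one would display it as a digit string), this sequence reads $b_n\, b_{n-1}\, \cdots\, b_1\, b_0$. The ``leading digit'' in this display is $b_n$, which by the PWI hypothesis ($a_n \ge a_0+1$ ensuring no length change under subtraction, as noted in the proof of Proposition~\ref{prop-borrow}) is always $0$. Dropping this leading zero leaves $b_{n-1}\, b_{n-2}\, \cdots\, b_1\, b_0$, and reversing this string character by character yields $b_0\, b_1\, b_2 \cdots b_{n-1}$, which coincides exactly with the PWBS digit string obtained above.

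The only substantive step is the algebraic identity Eq.~(\ref{eq-binary}); everything after is a matter of aligning two indexing conventions: the ``position~$i$'' convention internal to the sum (with $10^i$ weights), and the ``left-to-right written digit'' convention used when one speaks of a ``leading digit'' and a ``reverse''. Because of this, I do not anticipate a serious obstacle: the only thing to be careful about is justifying that the leading $b_n$ genuinely \emph{is} the zero that gets excluded (so that the PWBS's own length is $n$, not $n+1$), and this is exactly the remark $b_n = 0$ already used in establishing Proposition~\ref{prop-borrow}. The corollary then follows without further computation.
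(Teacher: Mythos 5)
Your proposal is correct and follows essentially the same route as the paper: both read the digit assignment directly off Eq.~(\ref{eq-binary}), identifying the coefficient of $10^i$ in $F/99$ as $b_{n-i-1}$ so that the written string is $b_0 b_1 \cdots b_{n-1}$, the reverse of the borrow sequence with the leading $b_n=0$ discarded. Your explicit care about the two indexing conventions and about why $b_n$ is the excluded digit is a slightly fuller version of the paper's one-line remark, but not a different argument.
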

This can be seen from Eq.\ref{eq-binary} that the first binary value 
in F/99 is $b_0$ for $i=n-1$, the second is $b_1$ for $i=n-2$, etc.,
and the last binary value is $b_{(n-1)}$ for $i=0$.

\begin{corollary}
For an initial integer $D$ of length $n+1$, 
the length of 
the corresponding Papadakis-Webster binary string is $n$.
\end{corollary}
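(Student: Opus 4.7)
The plan is to read the length directly off the formula proved in Theorem~\ref{theorem-99} and then argue that the leading coefficient cannot vanish. From Eq.~\ref{eq-binary} we have $F/99 = \sum_{i=0}^{n-1} b_{n-i-1} 10^i$, so the quotient has digits only at positions $10^0$ through $10^{n-1}$, giving an upper bound of $n$ on its length. What remains is to show that the leading coefficient, sitting at position $10^{n-1}$, is nonzero.

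That leading coefficient is $b_0$. Applying the recursion Eq.~\ref{eq-borrow} at $i=0$ together with the convention $b_{-1}=0$, one sees $b_0 = 1$ precisely when $a_0 < a_n$. So the length claim collapses to verifying that the hypotheses defining a PWI force $a_n > a_0$ strictly. I would extract this from the requirement in Definition~\ref{def-PWI} that $E = D - rev(D)$ share the same number of digits as $D$. Because $D > rev(D)$, no borrow propagates above the top digit, so $b_n = 0$; the leading digit of $E$ is then $a_n - a_0 - b_{n-1}$, and for $E$ to retain $n+1$ digits this quantity must be at least $1$, which forces $a_n \ge a_0 + 1 + b_{n-1} > a_0$. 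Hence $b_0 = 1$, and the PWBS has length exactly $n$.

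The only real obstacle is the index-reversal bookkeeping: the leading digit of $F/99$ corresponds to $b_0$, not to $b_{n-1}$, exactly as recorded in the corollary immediately preceding this statement. Once that is kept straight, the argument is essentially a one-line consequence of Eq.~\ref{eq-borrow} applied at the units position, combined with the length hypothesis built into Definition~\ref{def-PWI}. No independent estimate on the size of $F$ or on the other borrow bits is needed, since the mere nonvanishing of $b_0$ already pins down the leading digit of the PWBS.
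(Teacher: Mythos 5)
Your proof is correct and follows essentially the same route as the paper, whose entire justification is to read the length off the upper summation limit $n-1$ in Eq.~\ref{eq-binary}. You go one step further by checking that the leading coefficient $b_0$ (at position $10^{n-1}$) is actually nonzero --- using $b_{-1}=0$, $b_n=0$, and the requirement that $E$ keep $n+1$ digits to force $a_n>a_0$ --- a point the paper leaves implicit even though the summation limit alone only bounds the length from above.
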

This can be seen 
by the upper limit of summation in Eq.\ref{eq-binary}
($n-1$ instead of $n$). 
Specific examples can
be seen at Table \ref{table1}. Note that the length of the corresponding Papadakis-Webster
integer is either $n+2$ or $n+1$.

Table \ref{table1} shows all Papadakis-Webster integers when the initial integer $D < 10^7$.
Besides the well known Papadakis-Webster integers 99 and 1089 when the initial integers
have 2 or 3 digits, the new Papadakis-Webster integers include 9999, 10890, 10989,
99099, 109890, 109989, etc.

\begin{table}[H]  
\begin{center}
All PWIs when the initial integer is less than 10 millions  \\
\begin{tabular}{c|c|cc|c}
\hline
n+1 & num PWI & PWI & PWBS & not allowed binary strings\\
\hline
2 & 1 & 99 & 1 & \\
\hline
3 & 1 & 1089 & 11 &  10\\
\hline
4 & 3 &  9999 & 101 & 100 \\
 & & 10890& 110 & \\
 &  & 10989 & 111 & \\
\hline
5 & 3&  99099 & 1001 & 1000, 1010, 1011, 1100, 1101 \\
&  & 109890& 1110 & \\
&  & 109989 & 1111 & \\
\hline
6 & 8& 990099 & 10001 & 10000, 10010, 10100, 10110, \\
&  & 991089 & 10011& 10111,  11000, 11001, 11101\\
&  & 999999 & 10101 & \\
&  & 1089990& 11010 & \\
&  & 1090089 & 11011 & \\
&  & 1098900 & 11100& \\
&  & 1099890 & 11110 & \\
&  & 1099989& 11111 & \\
\hline
7 & 8 & 9900099 & 100001  & 100000, 100010, 100100, 100101,  \\
 &  & 9901089  & 100011  &100110, 100111,  101000, 101001,  \\
 &  & 10008999  & 101101  & 101010, 101011, 101100, 101110,    \\
 &  & 10890990  & 110010   & 101111, 110000, 110001, 110100,   \\
 &  & 10891089  & 110011  & 110101, 110110, 110111, 111000,     \\
 &  & 10998900  & 111100  & 111001, 111010, 111011, 111101 \\
 &  & 10999890  & 111110 & \\
 &  & 10999989  & 111111  & \\
\hline
8 & 21& (see appendix) & & \\
9 & 21 & (see appendix) & & \\
10 & 55 & & & \\
11 & 55 & & & \\
12 & 144 & & & \\
13 & 144 & & & \\
$\cdots$ & $\cdots$ & & & \\
\hline
$(n+1)$ even & $F_{n+1}$ & & & \\
$(n+1)$ odd & $F_{n}$ & & & \\
\hline
\end{tabular}
\end{center}
\caption{ \label{table1}
All PWI and the corresponding PWBS when the
starting integer's length (n+1, for $a_n a_{n-1} \cdots a_1 a_0$) 
is 1-7. Binary strings that are not PWBS are also 
listed (last column). The number of PWI's as a function of $n$
follows a (partial) Fibonacci sequence (1,1,3,3,8,8,21,21...).
The PWI and the corresponding PWBS for n+1=8,9
are included in the Appendix A.2.
}
\end{table}

The number of Papadakis-Webster integers increase gradually with the
number of digits of the integer. Webster shows that the number of unique Papadakis-Webster
integers as a function of digit length $n+1$ is a ``stepwise" Fibonacci sequence \citep{webster}, by
which we mean that the numbers are not 
$F_2=$1, $F_3=$2, $F_4=$3, $F_5=$5, $F_6=$8, $F_7=$13, $\cdots$, 
but 1,1,3,3, 8,8,21,21, $\cdots$ (see Table \ref{table1}).

On one hand, there are infinite numbers of Papadakis-Webster integers, one the other hand,
the percentage of Papadakis-Webster integers out of all possible integers decreases exponentially
as a function of the number of digits $n$:
$5^{-0.5} \phi^n/10^n \approx 6.18^{-n}/\sqrt{5}$ (where $\phi=(1+\sqrt{5})/2 \approx 1.618$ is the golden ratio)

\subsection{
Hairpin pairing rule for Papadakis-Webster binary strings}

\indent

For each Papadakis-Webster integer, we also list the corresponding Papadakis-Webster
binary string in Table \ref{table1}.  Not all binary strings are Papadakis-Webster binary
strings. For example, 10, 100, 1000, 1010, 1011, 1100, 1101, etc. In the following
proposition, we establish the existence of binary strings that are not Papadakis-Webster
binary strings. Note that we will still call 0's and 1's in a PWBS
(decimal) digits, not bits which are binary digits. The reason is that PWBSs are
still defined in the decimal system, not in the binary system. 

\begin{proposition}
\label{prop-not-all}
Not all binary strings are Papadakis-Webster binary string.
\end{proposition}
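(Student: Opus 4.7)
The plan is to prove this existential statement by exhibiting a single binary string that provably cannot be a PWBS. The simplest such counterexample is the two-digit string ``10'', which would have to arise from a $3$-digit input integer.

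First I would unwind the index bookkeeping from the proof of Theorem \ref{theorem-99}. Suppose for contradiction that ``10'' is a PWBS. By the corollary following Theorem \ref{theorem-99}, it must arise from a $3$-digit $D=(a_2 a_1 a_0)$ satisfying Definition \ref{def-PWI}, and by Eq.\ref{eq-binary} the quotient $F/99$ has digits $b_0,b_1$ reading from the most significant to the least significant position (here $n=2$). Hence the assumption $F/99=10$ forces $b_0=1$ and $b_1=0$. Next I would derive a contradiction by evaluating Eq.\ref{eq-borrow} at the middle position: since $a_{n-1}=a_1$ when $n=2$, we have $a_1-a_{n-1}-b_0=-b_0=-1<0$, which by Eq.\ref{eq-borrow} forces $b_1=1$, contradicting $b_1=0$.

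The main obstacle is purely notational, namely keeping straight the index reversal between the borrowing sequence $\{b_i\}$ (indexed by digit position in $D$) and the PWBS digits as read from left to right; once that is settled, the argument reduces to the observation that a self-paired digit cannot absorb an incoming borrow without producing an outgoing one. The same mechanism generalizes (for example, when $n$ is even the central position $i=n/2$ always forces $b_{n/2}=b_{(n/2)-1}$, ruling out many further binary strings), and foreshadows the full ``hairpin pairing rule'' promised by the section heading, but this broader structure is not needed to settle Proposition \ref{prop-not-all}.
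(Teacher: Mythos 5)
Your proposal is correct and uses essentially the same mechanism as the paper's proof: the paper's first counterexample is precisely the pattern $b_{n/2-1}=1$, $b_{n/2}=0$ for even $n$, which forces the impossible inequality $-1\ge 0$ at the self-paired central position, and your argument is the $n=2$ instance of this (yielding the excluded string ``10'' from Table \ref{table1}). The only difference is that the paper treats both parities of $n$ in general form, while you settle the existential claim with the single minimal case, which suffices.
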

\begin{proof}
We prove it by two counterexamples. Suppose $n$ is an even number, and the
number of digits $n+1$ is odd. It means that there is a digit 
exactly in the center position $a_{n/2}$. We will show that $1 \cdots 10 \cdots$,
where 1 is the digit-borrowing indicator value for $D-rev(D)$  at 
position $n/2-1$, and 0 is that at the position $n/2$, can not be a
Papadakis-Webster binary string. Considering the following digit-borrowing pattern:
\begin{equation}
\begin{matrix}
 & a_n & a_{n-1} & \cdots & a_{n/2} & a_{n/2-1} & \cdots & a_1 & a_0 \\
 -) & a_0 & a_1 & \cdots & a_{n/2} & a_{n/2+1} & \cdots & a_{n-1} & a_n \\
\hline
 b_i  & (0)  &  & \cdots & 0 & 1 & \cdots &  &  1 \\
\end{matrix}
\end{equation}
which means $a_{n/2-1} < a_{n/2+1}$ or $a_{n/2-1}-1 < a_{n/2+1}$,
(need to borrow) and $a_{n/2}-1 \ge a_{n/2}$ (no need to borrow). But the
latter inequality, implying $-1 \ge 0$,  is impossible. Therefore, the
reverse of the digit-borrowing binary string (after removing the leading 
digit 0,
in parenthesis), $1 \cdots 10 \cdots$, cannot be a Papadakis-Webster binary string.  

In the second example, suppose $n$ is odd (then the number of digits is even).
The central two positions are $(n+1)/2$ and $(n-1)/2$. Considering the following
digit-borrowing patterns: 
\begin{equation}
\begin{matrix}
    & a_n & a_{n-1} & \cdots & a_{(n+1)/2} & a_{(n-1)/2} & a_{(n-3)/2} & \cdots & a_1 & a_0 \\
-)  & a_0 & a_1 & \cdots & a_{(n-1)/2} & a_{(n+1)/2} & a_{(n+3)/2} & \cdots & a_{n-1} & a_n \\
\hline
b_i   & (0)  &  & \cdots & 0 & 0 & 1 & \cdots & & 1 \\
\end{matrix}
\end{equation}
This implies $a_{(n-1)/2}-1 \ge a_{(n+1)/2}$ and $ a_{(n+1)/2} \ge a_{(n-1)/2}$.
Adding the two lead to $-1 \ge 0$, which is impossible. Therefore, the reverse of
the digit-borrowing binary sequence can not be a Papadakis-Webster binary string.
$\blacksquare$
\end{proof}

\begin{definition}
{\bf (Paired positions in Papadakis-Webster binary string)} Denote Papadakis-Webster binary string
as $B_{n-1} B_{n-2} \cdots B_{n-k-1} \cdots B_{k-1} \cdots B_1 B_0$, which is the reverse of
the digit-borrowing sequence ($b_n b_{(n-1)} \cdots b_1 b_0$), after removing $b_n$. The positions
$k-1$ and $n-k-1 (k=1, 2, \cdots n-1$) are defined as paired positions.
\end{definition}
The following graph shows the correspondence between $\{ a_i\}$, $\{ b_i \}$, and $\{ B_k \}$:
\normalsize
\begin{equation}
\begin{matrix}
 & a_n & a_{n-1} & \cdots & a_{n-k} & \cdots &  a_k & \cdots & a_1 & a_0 \\
 (-) & a_0 & a_1 & \cdots & a_k & \cdots &  a_{n-k} & \cdots & a_{n-1} & a_n \\
\hline
\mbox{(digit-borrow)} & b_n  & b_{(n-1)}  & \cdots & b_{(n-k)} & \cdots &  b_k & \cdots & b_1 & b_0 \\
\mbox{(reverse PWBS)} & \mbox{(removed)}   & B_0  & \cdots & B_{k-1} & \cdots &  B_{n-k-1} & \cdots & B_{n-2} & B_{n-1} \\
\end{matrix}
\end{equation}
\large
Note that the leading digit $B_{n-1}$ does not have a paired position.

\begin{theorem}
\label{theorem-hairpin}
Papadakis-Webster binary string can not have the same value at paired positions 
except for two situations: the digits preceding them are both 1 and their own values
are both 1, or, the digits preceding them are both 0 and their own values are both 0.
\end{theorem}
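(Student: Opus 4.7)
The plan is to reduce the assertion about the PWBS $B_{n-1}\cdots B_0$ to a short algebraic manipulation of the borrow recurrence in Equation~(\ref{eq-borrow}). From Equation~(\ref{eq-binary}) and the correspondence diagram preceding the theorem, we have $B_j = b_{n-1-j}$, so the paired PWBS positions $k-1$ and $n-k-1$ correspond to the borrow indicators $b_{n-k}$ and $b_k$, while the PWBS digits $B_k$ and $B_{n-k}$ immediately to the left of them correspond to $b_{n-k-1}$ and $b_{k-1}$. The theorem therefore reduces to a statement about the two symmetric pairs $(b_k,b_{n-k})$ and $(b_{k-1},b_{n-k-1})$ of entries of the borrow sequence.

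First I would rewrite the recurrence of Equation~(\ref{eq-borrow}) as the sharp integer equivalences
\begin{equation*}
b_i = 1 \iff a_i - a_{n-i} \leq b_{i-1}-1, \qquad b_i = 0 \iff a_i - a_{n-i} \geq b_{i-1},
\end{equation*}
and apply them at $i=k$ and $i=n-k$. Because $a_k-a_{n-k}$ and $a_{n-k}-a_k$ are negatives of each other, adding the two ``1'' inequalities in the case $b_k=b_{n-k}=1$ gives $0 \leq b_{k-1}+b_{n-k-1}-2$, which forces $b_{k-1}=b_{n-k-1}=1$; adding the two ``0'' inequalities in the case $b_k=b_{n-k}=0$ gives $0 \geq b_{k-1}+b_{n-k-1}$, which forces $b_{k-1}=b_{n-k-1}=0$. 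Translating back to PWBS, the coincidence $B_{k-1}=B_{n-k-1}=1$ forces $B_k=B_{n-k}=1$, and $B_{k-1}=B_{n-k-1}=0$ forces $B_k=B_{n-k}=0$. These are exactly the two permissible situations listed in the theorem, and no other coincidence at paired positions can occur.

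I would then dispatch the degenerate self-pair case $k-1=n-k-1$ (possible only when $n$ is even and $k=n/2$), where the two equivalences above specialize via $a_{n/2}-a_{n/2}=0$ to $b_{n/2}=1 \iff b_{n/2-1}=1$ and $b_{n/2}=0 \iff b_{n/2-1}=0$, so the paired value automatically coincides with its predecessor, a special case of the permitted patterns; this also recovers and strictly generalizes the middle-digit obstruction proved directly in Proposition~\ref{prop-not-all}. The main obstacle is purely notational: keeping the indexing of $\{a_i\}$, $\{b_i\}$ and $\{B_k\}$ aligned so that the digit preceding $B_{k-1}$ in the PWBS is identified with $b_{n-k-1}$ rather than with $b_{n-k+1}$; once that bookkeeping is in place, the implication follows from a single addition of inequalities.
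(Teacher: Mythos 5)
Your proposal is correct and follows essentially the same route as the paper's proof: both exploit the fact that the subtractions at positions $k$ and $n-k$ involve the same two digits $a_k$ and $a_{n-k}$ in opposite order, so the two borrow conditions from Eq.~(\ref{eq-borrow}) can be combined to constrain the pairs $(b_k,b_{n-k})$ and $(b_{k-1},b_{n-k-1})$. Your single addition of the two inequalities is a tidier packaging that uniformly covers the cases the paper dispatches with ``all other combinations can be shown similarly,'' and it also makes explicit the self-paired middle position, but the underlying argument is the same.
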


\begin{proof}
Since the two paired positions are involved in the subtraction operation of $a_k$ and $a_{n-k}$
when the two are switched, generally speaking they can not both borrow digit, or both not borrow
digit. For example, if they both borrow, and without their neighbors borrowing from them,
then  $a_k < a_{n-k}$ and $a_{n-k} < a_k$, which is impossible. Similarly, if they
both do not borrow, whereas their neighbors borrowing from them, then
$a_k-1 \ge a_{n-k}$ and $a_{n-k}-1 \ge a_k$, which implies $a_k-1 \ge a_{n-k} \ge a_k+1$,
or $-1 \ge +1$, also impossible. All other combinations can be shown similarly.

For our two exceptions, the first is equivalent to $a_k-1 < a_{n-k}$ and $a_{n-k}-1 < a_k$,
which implies $a_k-1 < a_{n-k}  < a_k+1$, with a unique solution $a_k=a_{n-k}$.
In the second exception, $a_k \ge a_{n-k}$ and $a_{n-k} \ge a_k$ also has a unique
solution of $a_k=a_{n-k}$.
$\blacksquare$
\end{proof}

Theorem \ref{theorem-hairpin} provides an algorithm to generate all Papadakis-Webster binary
strings:
\begin{enumerate}
\item
Start from $B_{n-1}=1$.
\item
Pick the next digit at $B_{n-2}$. If $B_{n-2}=1$, continue to $B_{n-3}$.
If $B_{n-2}=0$, set the digit at the pairing position with $n-2$ (which is $B_0$) to 1.
\item
For any k $(k=1, 2, \cdots, n-1)$, when $B_{n-k-1}=0$ but $B_{n-k}=1$, set $B_{k-1}=1$;
when $B_{n-k-1}=1$ but $B_{n-k}=0$, set $B_{k-1}=0$;
when $B_{n-k-1}=B_{n-k}$, move to the next digit. 
\end{enumerate}
Note: (1) this procedure will be carried out even when the index passes the mid-point.
In other words, even if the pairing position is on the left of the current position,
the rule needs to be checked; (2) if the pairing position is the same as the current position,
the rule still needs to be checked.

Theorem \ref{theorem-hairpin} also provides a way to check if a binary
sequence is PWBS or not. Given a binary string started with 1: 
$x_{n-1} x_{n-2} x_{n-3}\cdots x_1 x_0$; removing the leading digit $x_{n-1}=1$, then pairing
the remaining digits around the middle position (pairing $x_{n-2}$ with $x_0$, $x_{n-3}$ with $x_1$, etc. 
The digits in the pairing position should not be the same unless that same value is 1 and
the digits on their left are also 1, or, that same value is 0 and the digits on their left
are also 0.

Fig.\ref{fig1}(A)(B) show two examples. Removing the leading 1 from 10111,
resulting in 0111. The middle point is the space between second 1 and the third 1.
Folding the string around the middle point, with first 0 pairing with the last 1,
and second 1 pairing with the third 1 (Fig.\ref{fig1}(A). It is fine when the
two pairing digits have different values, but when the second 1 is the same as the third 1,
the digits on their left should also be 1s -- they are not, so it is not a PWBS.

In the second example, 110010011100110110,  removing the lead 1, the middle
position is the 1 that separates 7 digits on the left and 7 on the right 
(see Fig.\ref{fig1}(B)).
The middle position pairs with itself, and their (it's) left digit has to be 1 --
indeed it is, so there is no problem. All other pairing digits have different values,
and again there is no problem. Therefore, this sequence is a PWBS. 
Because the folding and pairing of digits in Fig.\ref{fig1} 
is reminiscent of a secondary structures of RNA \citep{holley,grover}, the
hairpin or stem loop, we call Theorem \ref{theorem-hairpin}
``hairpin rule for Papadakis-Webster binary strings".

\begin{figure}[H]
 \begin{center}
  \begin{turn}{-90}
  \end{turn}
 \includegraphics[width=0.9\textwidth]{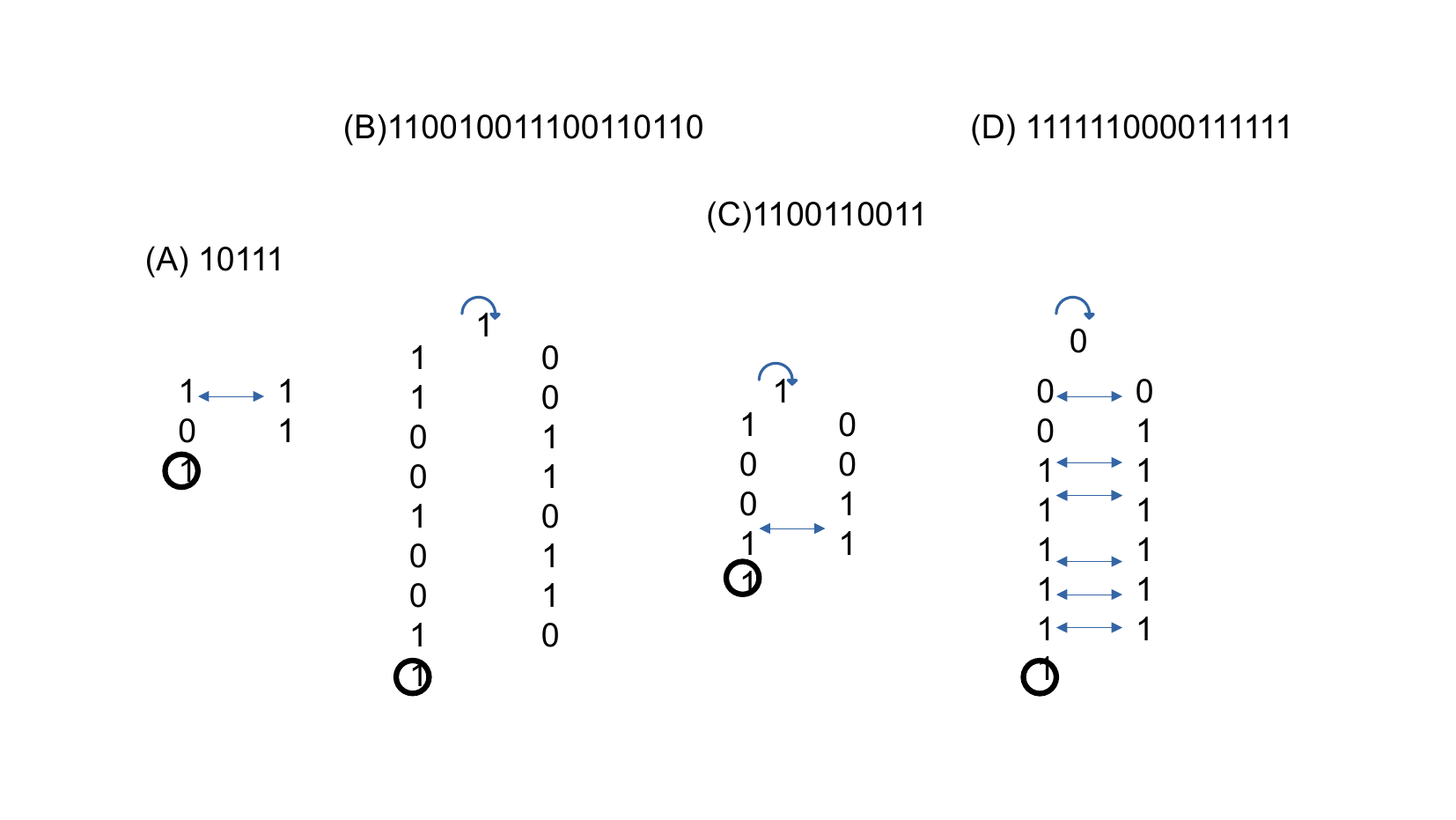}
\end{center}
\caption{ \label{fig1}
Illustration of the hairpin pairing rule for PWBS's. To check if a binary string is
PWBS, remove the leading 1 from the sequence, then fold the rest of the sequence
around its middle position (either between two middle digits, or the middle digit itself).
If all pairing digits have different values, it is a PWBS. If pairing digits are both 1's,
as long as their left digits are also both 1's, it is fine; similarly, if pairing
digits are both 0's, it will be fine if the digits on their left are also both 0's.
For the self-pairing digit in the middle position, its left digit should
be the same as itself in order to be PWBS.
Using this rule, (B)(C)(D) are PWBS, whereas (A) is not PWBS.
The double-head arrows mark the digits in pairing position that have
the same value. The single-head arrows mark the self-pairing digits.
}
\end{figure}

\section{Extension of 1089 trick to any number of iterations }

\subsection{Introducing a new iteration or mapping or dynamical system \label{sec-map}}
\indent

As part of the extension of the ``1089 trick", we introduce the following 
iteration on integers (steps $i=1,2, \cdots \infty$) with two options:
\begin{eqnarray}
\label{eq-map}
D_{i+1} =
\left\{
\begin{matrix}
D_i - rev(D_i) & \mbox{if $rev(D_i) < D_i$} \\
D_i + rev(D_i) & \mbox{if $rev(D_i) \ge D_i$ } \\
\end{matrix}
\right.
\end{eqnarray}
where the emphasis is given to the way we transform a two-step procedure
to an endless dynamical system. The original 1089 trick has a caveat that the number 
of digits after subtraction should not be smaller (otherwise the trick
would not work, see Subsection \ref{sec-caveat}). Our iteration of Eq.\ref{eq-map}
is indifferent to this caveat, and our results are more general.

Note that the subtraction is carried out when the reverse is strictly smaller than the
original integer. If the condition $<$  is changed to $\le$, a palindrome integer
will iterate to zero (and forever be zero). We keep the current conditions of 
Eq.\ref{eq-map} in order to produce more interesting dynamics. 
To simplify the citing, we call Eq.\ref{eq-map} 
``iterative digital reversal" or IDR in the remaining of the paper.

For any dynamical systems, there can be these possible dynamical behaviors:
\begin{enumerate} 
\item
Fixed points: $\exists I \in \mathbb{N}^+$ so that $\forall i > I$, $D_{i+1}=D_i$. 
\item
2-cycles:  $\exists I \in \mathbb{N}^+$ so that $\forall i > I$, $D_{i+2}=D_i$.
\item
Periodic with higher cycle length:
$\exists I \in \mathbb{N}$ so that $\forall i > I$,
 $D_{i+p}=D_i$  ($p \in \mathbb{N}^+ > 2$).
\item
Non-periodic: $\nexists I, p  \in \mathbb{N}^+$,  so that
$D_{i+p}=D_i$ ($i > I$).
\end{enumerate} 
The item-1 in the above list, fixed point, is impossible. For IDR, if $D_{i+1}=D_i$,
then $rev(D_i)$=0, or $D_i=0$. However, we specified the conditions used in 
Eq.\ref{eq-map} so that $D_i$ can never be zero.

Because limiting sets of a dynamical system are called ``attractors",
those of our 1089-trick-inspired map, IDR, can be called ``1089 attractors",
whereas the end-integers from the original 1089 trick (and its extension to
any number of digits) might be considered as 1089 attractors in a narrow sense. 
It explains the words ``1089 attractor" in our title. 
In the next subsection, we will study the digit patterns in the 2-cycle attractors.

\subsection{Constraints on the integers in the 2-cycle attractor}

\indent

\begin{proposition}
\label{prop-tenD}
If $D_I$ ($I \in \mathbb{N}^+$) and $D_{I+1} > D_I$ are two integers in the limiting 
attractor of IDR, then $D_{I+1}= 10D_I$.
\end{proposition}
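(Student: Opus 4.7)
The plan is to exploit both steps of the cycle. Since $D_{I+1}>D_I$, the mapping (\ref{eq-map}) used addition at step $I$, giving $D_{I+1}=D_I+rev(D_I)$ with $rev(D_I)\ge D_I$. In a limiting 2-cycle attractor one has $D_{I+2}=D_I$, and because $D_I<D_{I+1}$ the mapping must have used subtraction at step $I+1$, giving $rev(D_{I+1})=D_{I+1}-D_I$. Combining these two relations produces the key identity $rev(D_{I+1})=rev(D_I)$, and it then suffices to deduce from this, together with $D_{I+1}>D_I$, that $D_{I+1}=10\,D_I$.

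For that deduction, write $D_I$ with $n+1$ digits. From $D_{I+1}=D_I+rev(D_I)<2\cdot 10^{n+1}$, the integer $D_{I+1}$ has either $n+1$ or $n+2$ digits. I would dispose of the same-length case first: if $D_{I+1}$ has $n+1$ digits, then on both sides of $rev(D_{I+1})=rev(D_I)$ the operation $rev$ is essentially length-preserving, and matching the reversed digit strings termwise forces $D_{I+1}=D_I$, contradicting $D_{I+1}>D_I$.

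In the $(n+2)$-digit case the leading digit of $D_{I+1}$ is nonzero, so the integer $rev(D_{I+1})$ would acquire $n+2$ digits unless the trailing digit of $D_{I+1}$ vanishes; hence $D_{I+1}$ ends in $0$. Comparing the remaining digits of $rev(D_{I+1})$ against those of $rev(D_I)$ then shows that $D_{I+1}$ is precisely $D_I$ with a trailing zero appended, i.e.\ $D_{I+1}=10\,D_I$. The main obstacle in this last step is that $D_I$ may itself already have one or more trailing zeros, so $rev$ drops leading zeros and is not injective; the digit sequences must be matched modulo those leading zeros, which I expect will require a short case analysis on how many trailing zeros each of $D_I,\,D_{I+1}$ carries, but no deep new idea. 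Finally, I would check consistency: $D_{I+1}=10\,D_I$ gives $rev(D_I)=9D_I\ge D_I$, so addition is correctly triggered at step $I$, and $rev(D_{I+1})=9D_I<10\,D_I=D_{I+1}$, so subtraction is correctly triggered at step $I+1$, closing the 2-cycle.
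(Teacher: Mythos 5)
Your proposal is correct and follows essentially the same route as the paper: use the forced addition/subtraction alternation to get $rev(D_{I+1})=rev(D_I)$, then conclude from $D_{I+1}>D_I$ that $D_{I+1}$ must be $D_I$ with exactly one appended zero. Your length bound $D_{I+1}<2\cdot 10^{n+1}$ and the explicit handling of $rev$'s non-injectivity on trailing zeros just make more careful what the paper asserts directly.
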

\begin{proof}
Because $D_{I+1} > D_I$, the second option in Eq.\ref{eq-map} is used to map
$D_I$ to $D_{I+1}$, and because $D_{I+2}=D_I < D_{I+1}$, the first option in 
Eq.\ref{eq-map} is used to map $D_{I+1}$ to $D_{I+2}$:
\begin{eqnarray}
D_{I+2} &=& D_{I+1} - rev(D_{I+1}) \nonumber \\
 &=& D_I +rev(D_I) - rev( D_{I+1})
\end{eqnarray}
Because these are 2-cycle elements, $D_{I+2}=D_I$, therefore
$rev(D_I)=rev(D_{{I+1}})$. There are only two possibilities: the first is that $D_I=D_{I+1}$,
which can not be correct
as there is no fixed-point solution to Eq.\ref{eq-map} and
we only consider 2-cycle here. The second 
possibility is that $D_{I+1}$ is $D_I$ following by a string of 0s.
Eq.\ref{eq-map} can only increase or decrease the length of an integer by 1, therefore,
the number of trailing zeros can only be one. In other words, $D_{I+1}=D_I \times 10$.
\end{proof}

\begin{corollary}
\label{cor-plus-minus}
Among the two integers in the 2-cycle of limiting attractor of IDR,
the subtraction and addition options in Eq.\ref{eq-map} are alternately used.
\end{corollary}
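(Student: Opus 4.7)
The plan is to leverage Proposition \ref{prop-tenD} together with the monotonicity of the two branches of the map in Eq.(\ref{eq-map}). Concretely, denote the two elements of the 2-cycle by $D_I$ and $D_{I+1}$, and without loss of generality assume $D_I < D_{I+1}$. By Proposition \ref{prop-tenD}, in fact $D_{I+1}=10 D_I$, so the two elements are strictly ordered, which is all that the argument needs.

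Next I would separate the two branches of the IDR map by their monotone behavior. Whenever $rev(D_i) > 0$, which is automatic for positive $D_i$, the subtraction branch produces $D_i - rev(D_i) < D_i$, while the addition branch produces $D_i + rev(D_i) > D_i$. In other words, the branch that is used is determined by the sign of $D_{i+1}-D_i$: addition is used precisely when $D_{i+1}>D_i$, and subtraction precisely when $D_{i+1}<D_i$.

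Applying this observation to the 2-cycle, the step $D_I \mapsto D_{I+1}$ strictly increases the integer, so the addition branch (the second case of Eq.(\ref{eq-map})) is invoked; equivalently $rev(D_I)\ge D_I$. The step $D_{I+1}\mapsto D_{I+2}=D_I$ strictly decreases the integer, so the subtraction branch (the first case) is invoked; equivalently $rev(D_{I+1}) < D_{I+1}$. Because the orbit then repeats, these two choices keep alternating forever, which is exactly the claim of the corollary.

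There is no hard step here; the only thing to keep in mind is that Proposition \ref{prop-tenD} already rules out $D_I=D_{I+1}$ (no fixed points exist), so the strict inequality needed to assign a unique branch at each element of the cycle is guaranteed. The argument is therefore a short direct consequence of Proposition \ref{prop-tenD} and the definition of IDR.
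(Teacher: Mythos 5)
Your argument is correct and is essentially the same as the paper's: the paper likewise observes that the step $D_I \rightarrow D_{I+1}$ increases the value so addition must be used, and $D_{I+1} \rightarrow D_I$ decreases it so subtraction must be used. Your version merely makes explicit the supporting facts (strict monotonicity of each branch and the strict ordering $D_{I+1}=10D_I$ from Proposition \ref{prop-tenD}) that the paper leaves implicit.
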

It is because  $D_I \rightarrow D_{I+1}$ increases the integer value
so addition part of Eq.\ref{eq-map} must be used, and 
$D_{I+1} \rightarrow D_{I}$ decreases the integer value, thus the
subtraction is used.

\begin{corollary}
\label{cor-palin}
For two elements in the 2-cycle attractor of IDR, $D_{I+1} > D_I$, then 
$rev(D_I)/D_I=9$. Inversely, if rev(D)= 9D, then
D and 10D are the two 2-cycle elements of IDR. 
\end{corollary}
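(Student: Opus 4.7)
The forward direction is essentially an algebraic consequence of Proposition \ref{prop-tenD} combined with Corollary \ref{cor-plus-minus}. First I would observe that since $D_{I+1}>D_I$, the transition $D_I\mapsto D_{I+1}$ must use the addition branch of Eq.(\ref{eq-map}), so $D_{I+1}=D_I+rev(D_I)$. Proposition \ref{prop-tenD} identifies the same quantity as $10D_I$. Equating the two expressions and solving gives immediately $rev(D_I)=9D_I$.

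For the converse, assume $rev(D)=9D$; I would verify directly that $\{D,10D\}$ is a 2-cycle of IDR by checking both transitions. Since $rev(D)=9D>D$, the addition branch applies at $D$ and produces $D+rev(D)=10D$. The one step that requires a little care is computing $rev(10D)$: appending a trailing zero to $D$ and then reversing puts that zero in the leading position, so, read as an integer, $rev(10D)=rev(D)=9D$. This is the only place where one has to invoke the digit-length convention of the digital reversal definition at the start of Section~2. Once that identity is in hand, $rev(10D)=9D<10D$ triggers the subtraction branch at $10D$, yielding $10D-rev(10D)=10D-9D=D$, which closes the 2-cycle.

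\textbf{Main obstacle.} There is no combinatorial difficulty and no structural argument to organize; the only subtlety is the bookkeeping around $rev(10D)$, where the leading zero obtained after reversal must be handled in accordance with the digital-reversal convention so that $rev(10D)=rev(D)$ as integers. Everything else in both directions reduces to a single line of algebra on top of the previously proved results.
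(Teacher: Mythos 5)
Your proposal is correct and follows essentially the same route as the paper: the forward direction equates $D_I+rev(D_I)$ with the $10D_I$ from Proposition \ref{prop-tenD}, and the converse verifies the two transitions directly using $rev(10D)=rev(D)=9D$. If anything, you are slightly more explicit than the paper about the leading-zero bookkeeping in $rev(10D)$ and about which branch of Eq.(\ref{eq-map}) fires at each step, which is a small improvement rather than a deviation.
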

Since $D_{I+1}= D_I+rev(D_I)= D_I \times 10$ according to Proposition \ref{prop-tenD}, 
$rev(D_I)= 9D_I$. 
Similarly, if rev(D) = 9D, then D+rev(D) = 10D, 10 D-rev(D) = D,
so D and 10 D are the 2-cycle elements of IDR.
The fact that the digital reversal of 1089 is divisible by itself
was mentioned in \citep{hardy}. 

\begin{definition}
{\bf (Palintiple)} If an integer D, whose digital reversal rev(D) is divisable by itself,
i.e., rev(D)/D = k, where $k$ is a positive integer, then D is called a ($k$-)palintiple.
\end{definition}
Corollary \ref{cor-palin} is equivalent to the statements that 
(1) (the smaller member of) any 2-cycle of IDR is 9-palintiple;
(2) any 9-palintiple integer, as well as its 10-multiple, are 2-cycle elements of IDR. 

Note that other publications may define 9801 as palintiple.
In our definition above, 1089 is a palintiple. Our definition is more convenient
in the context of IDR.

\begin{proposition}
\label{prop-1089n}
If $D_I$ ($I \in \mathbb{N}^+$) and $D_{I+1} > D_I$ are two integers in the limiting
attractor of IDR, the first two digits of $D_I$ are 1 and 0, and
the last two digits of $D_I$ are 8 and 9.
\end{proposition}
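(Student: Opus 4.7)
The plan is to exploit the identity $rev(D_I) = 9 D_I$ from Corollary \ref{cor-palin}, rewritten as the addition $D_I + rev(D_I) = 10 D_I$, and analyze it column by column. Writing $D_I = a_n a_{n-1}\cdots a_1 a_0$, I will use the standard bound that every decimal carry $c_i$ lies in $\{0,1\}$ (each column sum is at most $9+9+1=19$), and work in the regime $n\ge 3$, which is the only relevant one since a direct check rules out any $9$-palintiple with two or three digits.

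First I would nail down the outer digits $a_n$ and $a_0$ from the two extremal columns, without any magnitude estimate on $D_I$. The overflow column at position $n+1$ reads $c_n = a_n + 10 c_{n+1}$, and since $a_n\ge 1$ while $c_n\le 1$, this forces $a_n=1$, $c_n=1$, $c_{n+1}=0$. The units column reads $a_0+a_n=10 c_0$, so $a_0=9$ and $c_0=1$. Next I would squeeze $a_{n-1}$ into a binary choice via the leading column $n$: the equation $a_n+a_0+c_{n-1}=a_{n-1}+10 c_n$ collapses to $a_{n-1}=c_{n-1}\in\{0,1\}$. Column $1$ then gives $a_1+a_{n-1}+1=9+10c_1$, i.e.\ $a_1+a_{n-1}=8+10c_1$.

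The main obstacle is killing the branch $a_{n-1}=1$. My plan there is to inspect column $n-1$, which is distinct from column $1$ precisely because $n\ge 3$. Assuming $a_{n-1}=1$, so $c_{n-1}=1$, the column-$1$ equation forces $a_1=7$ and $c_1=0$. Feeding these into column $n-1$, namely $a_{n-1}+a_1+c_{n-2}=a_{n-2}+10 c_{n-1}$, I obtain $a_{n-2}=c_{n-2}-2\in\{-2,-1\}$, a contradiction. Hence $a_{n-1}=0$, and substituting back into column $1$ immediately gives $a_1=8$, as required.
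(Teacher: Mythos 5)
Your proof is correct and follows essentially the same route as the paper's: both reduce to the carry-equation of $D_I + rev(D_I) = 10D_I$ and extract $(a_n,a_{n-1},a_1,a_0)=(1,0,8,9)$ from columns $n+1$, $0$, $n$, $1$, and $n-1$, differing only in the order in which the $a_{n-1}=1$ (equivalently $c_{n-1}=1$) branch is eliminated. Your explicit handling of the small-$n$ cases and of the overflow carry $c_{n+1}$ is a minor point of extra care the paper glosses over, but the argument is the same.
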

\begin{proof}
Using the proposition \ref{prop-tenD} and corollary \ref{cor-plus-minus},
$D_I +rev(D_I)= 10 \times D_I$. If $D_I=\sum_{i=0}^n a_i 10^i$, we have
$\sum_{i=0}^n a_i 10^i + \sum_{i=0}^n a_{n-i} 10^i = \sum_{i=0}^n a_i 10^{i+1}$.
To equate the coefficients on both sides, we use the digital
carryover binary sequence $\{ c_i\}$  defined in Eq.\ref{eq-carry}: 
\begin{equation}
\label{eq-tenD}
a_i + a_{n-i} -10c_i +c_{i-1}= a_{i-1}
\end{equation}
We can write Eq.\ref{eq-tenD} explicitly for the leading and trailing digits:
\normalsize
\begin{equation}
 \begin{matrix}
column & n+1 & n  & n-1 &  \cdots & 1 & 0\\
\hline
a_i: &  & a_n=1^{(1)}   & a_{n-1} \in (0,1)^{(3)}=0^{(4)} &   \cdots & a_1 \in (8,7)^{(3)}=8^{(4)}  & a_0=9^{(2)}\\
a_{n-i}: & & a_0=9^{(2)}  &a_1 \in (8,7)^{(3)}=8^{(4)} & \cdots  &
 a_{n-1} \in (0,1)^{(3)}=0^{(4)} & a_n=1^{(1)} \\
-10 c_i: &  & -10c_n=-10^{(1)}  &-10c_{n-1}=0^{(4)} & \cdots & -10c_1=0^{(3)} & -10c_0=-10^{(2)} \\
+c_{i-1}: &c_n=1^{(1)} & c_{n-1}=0^{(4)}&  c_{n-2} & \cdots  & c_0=1^{(2)} & 0 \\
\hline
a_{i-1} & a_n=1^{(1)} & a_{n-1} \in (0,1)^{(3)}=0^{(4)} & a_{n-2} &  \cdots  &  a_0=9^{(2)} & 0  \nonumber \\
 \end{matrix}
\end{equation}
\large
We can derive $(a_n, a_{n-1}, a_1, a_0)=(1,0,8,9)$,
as well as $(c_n, c_{n-1}, c_1,c_0)=(1,0,0,1)$, 
by the following steps (matching the superscripts above:
\begin{itemize}
\item
(1) (column-(n+1))
$c_n$ can not be zero because we know $D_{I+1}= 10D_I$
having one more digit than $D_I$. Therefore $c_n=1$, resulting to $a_n=1$.
\item
(2) (column-0)
$a_0+1-10c_0=0$. $c_0$ can not be zero, because otherwise we have
$a_0+1=0$, with a negative solution for $a_0$.  Therefore $c_0=1$, resulting in $a_0=9$.
\item
(3) (column-1 and column-n)
$a_1+a_{n-1}=8+10c_1$. $c_1$ can not be 1, because it implies $a_1=a_{n-1}=9$.
but from column-n, $a_{n-1}=c_{n-1}$ has only a binary value $\in (0,1)$. Therefore, 
$c_1=0$.
\item
(4) (from column-(n-1) and column-1)
$8 -10c_{n-1}+c_{n-2}=a_{n-2}$. $c_{n-1}$ can not be 1 because it implies
$8-10 + c_{n-2}=a_{n-2}$, or $a_{n-2}$ being negative. Therefore, $c_{n-1}=0$; 
then from column-n, we have $a_{n-1}=0$.
Because $a_1+a_{n-1}=8$, we have $a_1=8$.
$\blacksquare$
\end{itemize}
\end{proof}

\subsection{The 10(9)$_L$89 ($L \ge 0$) motif in 2-cycle integers
and proof that the quotients dividing by 99 are Papadakis-Webster binary strings \label{sec-10989}}

\indent

Table \ref{table2} shows all limiting 2-cycles of IDR when the initial integers
have length 1-7, as well as some examples with even larger initial integers. Not only
these confirm our proposition \ref{prop-1089n} that the limiting 2-cycle integers
start with 10 and end with 89, but there are more specific patterns.
For the 2-cycle integers, there is a fundamental building block of the form
10(9)$_L$89 where the integer $L \ge 0 $: these can be a single such block, or
a symmetric arrangement of multiple blocks. 
This can be summarized by the following proposition:
\begin{proposition}
\label{prop-v1}
If $M_L$ is of a form of 10(9)$_L$89  (i.e., the middle 9 repeats $L \ge 0$ times),
then $M_L$ and other symmetric forms constructed from $M_L$ and padding zeros:
$M_L (0)_K M_L$ (where $K  \ge 0$), 
or $M_{L_1} (0)_{H_1} M_{L_2} \cdots (0)_{K_m} \cdots  M_{L_2} (0)_{K_1} M_{L_1}$ 
(where integers $L_1, L_2, \cdots K_1, K_2, \cdots K_m \ge 0$), 
or $M_{L_1} (0)_{K_1} M_{L_2} \cdots (M)_{L_m} \cdots  M_{L_2} (0)_{K_1} M_{L_1}$ 
(where $L_1, L_2, \cdots L_m, K_1, K_2, \cdots   \ge 0$), 
are limiting 2-cycles of IDR.
\end{proposition}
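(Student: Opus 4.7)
The plan is to exploit Corollary~\ref{cor-palin}: to show that an integer $D$ (together with $10 D$) forms a 2-cycle of IDR, it is enough to establish the single identity $rev(D) = 9 D$, or equivalently $D + rev(D) = 10 D$. I would therefore reduce the entire proposition to verifying this one numerical identity for each $D$ in the stated families, after which Corollary~\ref{cor-palin} applies uniformly.

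For the base case $D = M_L = 10(9)_L 89$, I would write $M_L$ in the closed form $10^{L+3} + 10^{L+2} - 11$, obtain the analogous closed form for $rev(M_L)$, and add them to get $10 M_L$ in a single line. For the general palindromic cases I would decompose $D$ as a sum of shifted blocks, $D = \sum_j M_{L_j} \cdot 10^{q_j}$, where $q_j$ is the rightmost digit position of the $j$-th $M_{L_j}$ block inside $D$. The structural content of the palindromic arrangement (mirrored $L$-values and mirrored zero-gap lengths about the center) is exactly that reversing the entire digit string of $D$ sends each block at position $q_j$ into the slot of its mirror block and reverses the block internally. Since paired blocks share the same $L$ and hence the same length, this yields $rev(D) = \sum_j rev(M_{L_j}) \cdot 10^{q_j}$; combined with the base case, one gets $D + rev(D) = \sum_j 10 M_{L_j} \cdot 10^{q_j} = 10 D$, as required. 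Notably, no digit-level carry analysis is needed, because everything at this point is an identity of integer values.

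The main obstacle will be making the block-reversal identity $rev(D) = \sum_j rev(M_{L_j}) \cdot 10^{q_j}$ fully precise. This amounts to matching the digit-index involution $i \mapsto N - 1 - i$ (with $N$ the total digit-length of $D$) to the palindromic pairing of blocks, and checking that paired blocks occupy mirror slots $q_{j'} = N - q_j - \ell_j$ with matching lengths $\ell_{j'} = \ell_j$ and matching $L$-values. Once that bookkeeping is settled, the rest of the argument is essentially a single substitution. As a final sanity check I would verify that $D$ is not itself a palindrome, so that the strict inequality in the addition branch of IDR is genuinely triggered on $D$: since $rev(D) = 9 D \neq D$ for $D > 0$, one has $rev(D) > D$, and symmetrically $rev(10 D) = rev(D) = 9 D < 10 D$, confirming that IDR really does cycle between $D$ and $10 D$ rather than drifting away.
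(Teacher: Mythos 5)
Your proof is correct, but it takes a genuinely different route from the paper's. The paper proves Proposition \ref{prop-v1} by brute force at the digit level: it writes out the addition $M_L(0)_K M_L + \mathrm{rev}(\cdot)$ column by column, observes the result is $109(9)_{L-1}890(0)_{K-1}109(9)_{L-1}890$, then writes out the subtraction column by column and recovers the original integer, relegating the edge cases $K=0$, $L=0$ and the more general symmetric arrangements to ``can be checked'' and ``can be shown similarly.'' You instead reduce everything to the single palintiple identity $\mathrm{rev}(D)=9D$ via the inverse direction of Corollary \ref{cor-palin}, verify it in closed form for $M_L=10^{L+3}+10^{L+2}-11$, and lift it to all the symmetric families by the block-decomposition identity $\mathrm{rev}(D)=\sum_j \mathrm{rev}(M_{L_j})\,10^{q_j}$, which holds precisely because the palindromic arrangement sends each block slot to the slot of a block with the same $L$ (and each $M_L$ begins with $1$ and ends with $9$, so no leading/trailing-zero issues arise in reversing a block). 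Your route buys uniformity --- all families, including the degenerate gaps and the central-block variants, are handled by one substitution with no carry or borrow analysis --- at the cost of leaning on Corollary \ref{cor-palin}, whose inverse direction you correctly re-verify (including the point that $\mathrm{rev}(10D)=\mathrm{rev}(D)=9D<10D$, so the subtraction branch fires at $10D$). The paper's route is more self-contained and makes the intermediate integer $109(9)_{L-1}890\cdots$ explicit, which it reuses nowhere, but it is less complete on the general symmetric cases. Interestingly, the paper itself invokes the palintiple connection only for the converse statement (Proposition \ref{prop-v1-only}), citing external proofs; your argument shows the forward direction of Proposition \ref{prop-v1} also factors cleanly through that connection.
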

\begin{proof}
We first prove the case of $M_{L} (0)_{K} M_{L}$ when $K > 0$ and $L > 0$.
Since the first digit is 1 and the  last digit is 9, the addition is carried out first:
\begin{equation}
\label{eq-prove-1089-1}
\begin{matrix}
 & 10(9)_L 89 & (0)_K & 10 (9)_L 89 \\
+) & 98(9)_L 01 & (0)_K & 98 (9)_L 01 \\
\hline
 & 109 (9)_{L-1} 890 &  (0)_{K-1} & 109 (9)_{L-1} 890\\
\end{matrix}
\end{equation}
The next step will be subtraction because the last digit is 0:
\begin{equation}
\label{eq-prove-1089-2}
\begin{matrix}
 & 109 (9)_{L-1} 890 &  (0)_{K-1} & 109 (9)_{L-1} 890\\
-) & 098 (9)_{L-1} 901 &  (0)_{K-1} & 089 (9)_{L-1} 901\\
\hline
 & 010 (9)_{L-1} 989 & (0)_{K-1} & 010 (9)_{L-1} 989\\
= & 10 (9)_{L} 89 & (0)_{K} & 10 (9)_{L} 89
\end{matrix}
\end{equation}

When $K=0$ or/and $L=0$, 
it can be checked that the result remains to be correct,
due to the tailing 0 and/or leading 9 from the neighboring digits of
a repeating unit.
Proofs for the case of  other symmetric combinations of the motifs can be shown 
similarly.$\blacksquare$ 
\end{proof}

Our numerical runs, exhaustive for up to 
9-digit input integers, and then sampling several millions of randomly 
selected input integers of higher digital length, have convincingly 
indicated  that only integers in the form described by Proposition \ref{prop-v1}
are the lower members of the limiting 2-cycle, while the
second (and higher) number is invariably the tenfold multiple of the first.

\begin{proposition}
\label{prop-v1-only}
The integers described in Proposition \ref{prop-v1},
 as well as their 10-multiples,
are the only limiting 2-cycle elements of IDR. 
\end{proposition}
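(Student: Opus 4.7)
The plan is to reduce Proposition \ref{prop-v1-only} to a classification of 9-palintiples and then carry out a digit-level analysis. By Corollary \ref{cor-palin}, the lower member $D$ of every 2-cycle satisfies $rev(D) = 9D$, and its 2-cycle companion is automatically $10D$ by Proposition \ref{prop-tenD}. So it suffices to show that every integer $D$ with $rev(D) = 9D$ has one of the digit patterns described in Proposition \ref{prop-v1}; the upper elements $10D$ will then be supplied for free.

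Write $D = (a_n a_{n-1} \cdots a_1 a_0)$. Proposition \ref{prop-1089n} already pins down the four outermost digits $(a_n, a_{n-1}, a_1, a_0) = (1, 0, 8, 9)$ and the four boundary carries $(c_n, c_{n-1}, c_1, c_0) = (1, 0, 0, 1)$. Moreover, because $D + rev(D) = 10D$ is the PWI $F$ of Definition \ref{def-PWI}, Theorem \ref{theorem-99} and its corollary force every digit $a_i$ into the restricted set $\{0, 1, 8, 9\}$. The next step is to propagate the carryover recurrence
\[
a_{i-1} = a_i + a_{n-i} + c_{i-1} - 10 c_i
\]
from Eq.~\ref{eq-tenD} inward from both ends of $D$, jointly with its twin at the mirror position $n-i$. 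The two recurrences couple the paired positions, and together with $a_i \in \{0,1,8,9\}$ and $c_i \in \{0,1\}$ they admit only a short list of local configurations of $(a_i, a_{n-i}, c_{i-1}, c_i)$ and their mirrors.

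I expect this list to organize itself into a finite-state graph with three regimes: (i) \emph{interior of a 9-block}, with $(a_i, a_{n-i}) = (9, 9)$, corresponding to the inner 9s of an $M_L$ motif; (ii) \emph{interior of a 0-padding}, with $(a_i, a_{n-i}) = (0, 0)$, corresponding to a separator $(0)_K$; and (iii) a handful of \emph{transition} configurations with $(a_i, a_{n-i}) \in \{(0, 8), (1, 9), (8, 0), (9, 1)\}$ in which the local carries flip and glue a padding to an adjacent block. Starting from the boundary state imposed by $(1, 0, 8, 9)$ at the outermost columns, following the allowed transitions of this automaton until the two pointers $i$ and $n-i$ meet in the middle, and reading off the digit sequence, one should recover precisely the palindromic concatenations of $M_L$ blocks and $(0)_K$ paddings listed in Proposition \ref{prop-v1}, and nothing else. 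A useful sanity check during this reading is that the corresponding quotient $D/99$ must be a PWBS whose 0- and 1-blocks all have length at least two, which is what the regimes (i)--(iii) translate to after dividing by 99.

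The main obstacle is the case analysis at the meeting point of the twin indices --- at $i = n/2$ when $n$ is even, or across $i = (n-1)/2$ and $n-i = (n+1)/2$ when $n$ is odd. In each parity, consistency at the collision must be checked against each of the three regimes: the middle may sit inside a 9-block (yielding a central $M_{L_m}$ that straddles the center), inside a 0-padding (yielding a central $(0)_{K_m}$), or exactly at a transition. A secondary difficulty is guaranteeing that the enumeration of transition configurations in regime (iii) is genuinely exhaustive --- that no additional digit-and-carry combination consistent with both twin recurrences and $a_i \in \{0,1,8,9\}$ has been missed. Once the automaton is shown to be complete and the middle-consistency is verified in both parities, the reachable digit sequences match the forms in Proposition \ref{prop-v1} one-to-one, and their tenfold multiples close out the 2-cycles via Proposition \ref{prop-tenD}.
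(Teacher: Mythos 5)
Your plan follows essentially the same route as the paper's own treatment of Proposition \ref{prop-v1-only}: the paper likewise gives only an outline, built from exactly your two ingredients --- propagating the carry recurrence of Eq.~\ref{eq-tenD} inward from both ends as in the proof of Proposition \ref{prop-1089n}, and, alternatively, reducing via Corollary \ref{cor-palin} to the classification of 9-palintiples, for which it cites Hoey and Webster--Williams rather than completing the case analysis itself. The open items you honestly flag (exhaustiveness of your transition list, consistency at the collision of the twin indices in both parities) are precisely the parts the paper also leaves to the literature, so your outline is faithful to the intended argument but, like the paper's, stops short of a proof.

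One step in your write-up is circular as stated, however. You justify $a_i \in \{0,1,8,9\}$ by asserting that $10D = D + rev(D)$ is the PWI $F$ of Definition \ref{def-PWI} and then invoking the corollary to Theorem \ref{theorem-99}. But a PWI is by definition the output of the two-step procedure applied to some starting integer satisfying the hypotheses of Definition \ref{def-PWI}; that every 2-cycle element is such an output is exactly what Theorem \ref{theorem-sym-blocks} establishes, and that theorem is proved only after --- and by means of --- the classification you are trying to prove (the paper even remarks that 2-cycle elements such as 10890 fail the ``same number of digits'' condition used in the proof of Proposition \ref{prop-borrow}, so PWI membership is genuinely nontrivial for them). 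If you need the digit restriction, derive it directly from the coupled recurrences $a_{i-1} = a_i + a_{n-i} + c_{i-1} - 10 c_i$ at the paired positions $i$ and $n-i$, in the spirit of the column-by-column derivation in Proposition \ref{prop-1089n}, rather than importing it from the PWI corollary.
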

Outline of a proof of Proposition \ref{prop-v1-only}:
similar to the proof of Proposition \ref{prop-1089n} where we show
that  the property of $D_{I+1}=10D_I$ forces the first two digits to be 0,1, and the
last digits to be 8,9, we can continue to examine the constrain towards the
middle of the sequence. For example, one can show that the first three and the
last three digits can either (1,0,9,$\cdots$,9,8,9), or
(1,0,8,$\cdots$,0,8,9). The first 4 and last 4 digits in the first situation
would be (1,0,9,9, $\cdots$, 9,9,8,9), and in the second situation
(1,0,8,9, $\cdots$, 1,0,8,9), etc.
Once the uniqueness of $M_L=10 (9)_L 89$ as the fundamental 2-cycle motif
is established, its symmetric concatenation with padding zeros in various forms
can be shown also to be 2-cycles, similar to the proof for Proposition \ref{prop-v1}.

Alternatively, according to the Corollary \ref{cor-palin}, we only need to prove
that the patterns described in Proposition \ref{prop-v1} are the only integers
for 9-palintiples. This proof can be found in \citep{hoey} (by Dan Hoey)
and in \citep{webster12} (by Roger Webster and Gareth Williams).

Table \ref{table2} also shows that when $D_I$ divided by 99, the quotient is a binary
string. This binary string has some particular properties:
it is symmetric with respect to the center, and the length of 1-blocks
or 0-blocks is at least two. We have the following proposition:
\begin{proposition}
\label{prop-v2}
These binary strings after multiplied by 99 are limiting 2-cycle
elements of IDR: $(1)_{L} (L \ge 2)$, 
or $(1)_L (0)_K (1)_L (L,K \ge 2)$,
or $(1)_{L_1} (0)_{K_1} (1)_{L_2}\cdots(0)_{K_m} \cdots (1)_{L_2}(0)_{K_1}(1)_{L_1} $ \\
$(L_1, L_2, \cdots, K_1, K_2 \cdots K_m \ge 2)$,
or $(1)_{L_1} (0)_{K_1} (1)_{L_2}\cdots (1)_{L_m} \cdots(1)_{L_2}(0)_{K_1}(1)_{L_1} $ \\
$ (L_1, L_2, \cdots L_m, K_1, K_2, \cdots \ge 2)$,
and these are the only form of 99-quotient of 2-cycle elements of IDR.
\end{proposition}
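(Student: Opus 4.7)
The plan is to translate the structural characterization of 2-cycle elements in Proposition \ref{prop-v1-only} into the binary setting via division by 99. The substitution is clean once one computes the quotient of the fundamental motif. Specifically, I would first verify the arithmetic identity
\begin{equation}
M_L = 10(9)_L 89 = 10^{L+3} + 10^{L+2} - 11 = 11 \cdot (10^{L+2}-1) = 99 \cdot \underbrace{11\cdots 1}_{L+2 \text{ ones}},
\end{equation}
valid for every $L \ge 0$. In particular, $M_L/99 = (1)_{L+2}$, a one-block of length at least $2$.

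Second, by Proposition \ref{prop-v1-only} every lower 2-cycle element $D$ admits a decomposition
\begin{equation}
D = \sum_j M_{\lambda_j} \cdot 10^{p_j},
\end{equation}
where the shifts $p_j$ place the symmetric motifs $M_{\lambda_j}$ into non-overlapping decimal positions separated by zero-padding of length $K_j \ge 0$. Since $99 \mid M_{\lambda_j}$ and division distributes over the sum, $D/99 = \sum_j (1)_{\lambda_j + 2} \cdot 10^{p_j}$ is exactly the concatenation of one-blocks of lengths $\lambda_j + 2$ separated by zero-blocks. A direct length count shows that a decimal padding of $K$ zeros between two adjacent motifs translates into a zero-block of length $K + 2$ in $D/99$, because each $M_\lambda$ contributes $\lambda + 4$ decimal digits but only $\lambda + 2$ digits to its quotient by $99$ (a net loss of $2$ per motif). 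Hence every one-block of $D/99$ has length $\lambda_j + 2 \ge 2$, every zero-block has length $K_j + 2 \ge 2$, and the symmetry of the decimal arrangement transfers directly to the binary quotient. This is precisely the structure asserted.

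For the ``only such'' direction, the argument is a straightforward inversion: given any binary string $B$ of the listed form, multiplying by $99$ and applying the correspondence $(1)_\ell \mapsto M_{\ell-2}$ and $(0)_k \mapsto (0)_{k-2}$ (both legal since $\ell, k \ge 2$) recovers a symmetric $M$-motif concatenation of the type appearing in Proposition \ref{prop-v1}, hence a 2-cycle element. Since Proposition \ref{prop-v1-only} already declares these to be the \emph{only} 2-cycle integers, no other binary strings can arise as quotients $D/99$.

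The main obstacle, if anywhere, lies in carefully handling the degenerate cases $K_j = 0$ (no explicit zero-padding between adjacent motifs) and $\lambda_j = 0$ (the motif collapses to $1089$), particularly at the seams of the symmetric arrangement. One must verify that a direct concatenation $M_{\lambda_j} M_{\lambda_{j+1}}$ induces no decimal carries that would corrupt the clean decomposition $M_{\lambda_j} \cdot 10^{\lambda_{j+1}+4} + M_{\lambda_{j+1}}$; this is immediate once the non-overlapping property of the digit positions is written out, but the bookkeeping for the general symmetric form with an arbitrary number of motifs is the most notation-heavy step of the argument.
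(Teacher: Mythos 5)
Your proof is correct and follows essentially the same route as the paper's: both rest on the identity $10(9)_L 89 = 99 \times (1)_{L+2}$ together with the bookkeeping that $K$ padding zeros between decimal motifs become a zero-block of length $K+2$ in the 99-quotient, with the symmetric block structure inherited from Propositions \ref{prop-v1} and \ref{prop-v1-only}. Your write-up is somewhat more systematic than the paper's (which verifies two representative identities and asserts that ``other more complicated situations can be proven similarly''), in particular in writing out the general shifted-motif decomposition and making the appeal to Proposition \ref{prop-v1-only} for the ``only''-direction explicit.
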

\begin{proof}
It is not difficult to show that 10(9)$_{L-2}$89 = 99 $\times (1)_{L}$,
1089$(0)_{L-2}$1089 = 99 $\times 11(0)_{L}11$, and combining the two,
10(9)$_{L-2}$89(0)$_{K-2}$10(9)$_L$89= 99 $\times (1)_{L} (0)_{K} (1)_{L+2}$.
Since the number 9s between 10 and 89, $L-2 \ge0$, we have $L \ge 0$.
Since we require the number of 0s between motifs, $K-2 \ge 2$, then $K \ge 2$.
Other more complicated situations can be proven similarly.
$\blacksquare$
\end{proof}
Note that the minimum 0-block length or 1-block length is 2, versus 
no minimum length requirement of zero-padding 
in Proposition \ref{prop-v1}.

Comparing Table \ref{table1} and \ref{table2}, it can be seen that not all
PWI's, in fact very few of them (those in Table \ref{table1}),
 can be limiting 2-cycle integers
(those in Table \ref{table2}), due to the special requirements
for 2-cycle integers (e.g. symmetric binary string). On the other hand,
all 2-cycle integers (in Table \ref{table2}) are PWI's (in Table \ref{table1}), 
even though these do not satisfied the condition (i.e., $D-rev(D)$ having the
same number of digits as $D$) in the proof of PWI (proposition \ref{prop-borrow}).
For example, 10890-09801=1089 loses one digit, and according to section 2.3,
1089+9801 is not guaranteed to be PWI.
We propose the following theorem: 
\begin{theorem}
\label{theorem-sym-blocks}
The binary strings described in Proposition  \ref{prop-v2}, i.e., symmetric
arrangement of 0-block and 1-block, whose lengths are larger or equal to 2,
are PWBS.
\end{theorem}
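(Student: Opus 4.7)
The plan is to apply the hairpin pairing characterization of PWBS (Theorem~\ref{theorem-hairpin}) directly to the strings $S$ described in Proposition~\ref{prop-v2}. The only structural feature I will actually need is that each such $S$ is a palindrome whose leading digit is $1$; the block-length-$\geq 2$ assumption is what pins $S$ down as the $99$-quotient of a $2$-cycle attractor element (via Proposition~\ref{prop-v1}), but it is not required for PWBS membership itself. So the proof in fact establishes a slightly stronger statement: any palindromic binary string with leading digit $1$ is a PWBS, and Theorem~\ref{theorem-sym-blocks} falls out as the special case of interest.

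To carry this out I would write $S = B_{n-1}B_{n-2}\cdots B_0$ with $B_{n-1}=1$ and $B_i = B_{n-1-i}$ for every $i$, and fix an arbitrary pair index $k \in \{1,\ldots,n-1\}$. Palindromicity supplies two clean identities, $B_{k-1}=B_{n-k}$ and $B_{n-k-1}=B_k$, so the paired pair $(B_{k-1},B_{n-k-1})$ is determined by the adjacent pair $(B_{k-1},B_k)$. Two sub-cases then dispatch every non-self index. If positions $k-1$ and $k$ straddle a block boundary of $S$, then $B_{k-1}\neq B_k$, hence $B_{k-1}\neq B_{n-k-1}$, and the rule holds trivially because the paired values differ. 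Otherwise $k-1$ and $k$ lie in a common block with common value $v\in\{0,1\}$; the identities then give $B_{n-k-1}=B_k=v$ and $B_{n-k}=B_{k-1}=v$, so all four of $B_{k-1},B_k,B_{n-k-1},B_{n-k}$ coincide with $v$, which is exactly the exception permitted by Theorem~\ref{theorem-hairpin}. The self-pair, which arises only when $n$ is even at $k=n/2$, reduces to the requirement $B_{n/2}=B_{n/2-1}$, and this is immediate from the palindromic identity $B_{n/2}=B_{n-1-n/2}=B_{n/2-1}$. Having handled every pair index, Theorem~\ref{theorem-hairpin} certifies $S$ as a PWBS.

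The argument is essentially index chasing, so no serious obstacle appears; the only friction is bookkeeping. In particular, one must respect the convention that the ``preceding digit'' of $B_j$ is its left neighbour $B_{j+1}$ (the higher-index one), remember that the stripped leading digit $B_{n-1}$ still plays the role of preceding digit for $B_{n-2}$, and treat the central position as a self-pair only in the even-$n$ case. Once the palindromic identities $B_{k-1}=B_{n-k}$ and $B_{n-k-1}=B_k$ are written down, every branch of the hairpin rule closes in a single line.
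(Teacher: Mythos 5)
Your proof is correct, and it rests on the same mechanism the paper intends --- verifying the hairpin pairing rule of Theorem~\ref{theorem-hairpin} using the left--right symmetry of the string --- but it executes that mechanism completely where the paper does not, and it yields a strictly stronger conclusion. The paper's own argument is explicitly ``a proof by examples'': it exhibits the two strings 1100110011 and 1111110000111111, folds them as in Fig.~\ref{fig1}(C)(D), and asserts that in the general case one ``can visually see'' that matching paired digits always have matching left neighbours. You replace that assertion with the two index identities $B_{k-1}=B_{n-k}$ and $B_{n-k-1}=B_{k}$, from which every branch of the hairpin rule (distinct paired values; equal paired values forcing equal left neighbours; the self-pair at position $n/2-1$ when $n$ is even) closes in one line, with the boundary case $k=1$ (where the left neighbour of $B_{n-2}$ is the stripped leading digit $B_{n-1}$) handled by the same identities. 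In doing so you also isolate exactly what is needed: only palindromicity and the leading $1$, not the block-length-$\ge 2$ condition, so you in fact prove that \emph{every} palindromic binary string with leading digit $1$ is a PWBS --- a strengthening consistent with Tables~\ref{table1} and \ref{tableA1}, where, e.g., $101$, $1001001$, and $1010101$ (block lengths equal to $1$) all appear as PWBS even though they are not $99$-quotients of $2$-cycle elements. The one caveat, which your argument shares with the paper's, is that both use the hairpin rule as a \emph{sufficient} condition for PWBS membership, whereas the written proof of Theorem~\ref{theorem-hairpin} establishes only necessity; since the paper itself invokes sufficiency when it uses the rule to certify strings as PWBS, this is an inherited dependency rather than a gap introduced by you.
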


\begin{proof}
We present a proof by examples. In any binary sequence of this symmetric type, because of 
the symmetry and independently of the specific arrangement of any considered 
case, removing the first leading digit 1, then folding around the middle point, 
one can visually see that whenever the pairing digits have the same value, 
their left digits also have the same value, as we may see in
Fig.\ref{fig1}(C) and (D)
where two examples of such binary sequences:
1100110011, and 1111110000111111 are shown. 
$\blacksquare$
\end{proof}
Theorem \ref{theorem-sym-blocks} shows that
all limiting 2-cycle integers of IDR are PWI.

\textcolor{blue}{
}

\begin{table}[H]  
\begin{center}
All 2-cycle elements of IDR when the initial integer is less than
10 millions  \\
\begin{tabular}{c|c|cc|cc}
\hline
init seq & &   & $(1)_k$ or \\
length & 2-cycle attractor & divided by 99 & $(1)_{k1}(0)_{k}(1)_{k1}$ or\\
 & & & $(1)_{k1}(0)_{k2}(1)_k(0)_{k2}(1)_{k1}$ \\
\hline
1 and 2 & 1089, 10890 & 11, 110 & k=2\\
3 & + 10989, 109890 & 111, 1110 & k=3\\ 
 &109989, 1099890 & 1111, 11110 & k=4\\ 
4 & + 1099989,10999890 & 11111, 111110 & k=5 \\
5 & + 10999989, 109999890 &  111111, 1111110 & k=6\\
 & 10891089, 108910890 & 110011, 1100110 & $k_1=2,k=2$\\
 & 108901089,1089010890 & 1100011, 11000110 & $k_1=2,k=3$\\
 & 1098910989, 10989109890 & 11100111, 111001110 & $k_1=3,k=2$\\
6  & +109999989, 1099999890   & 1111111, 11111110 & k=7\\
 &  1089001089, 10890010890 & 11000011, 110000110 & $k_1=2,k=4$\\
 &  1099999989, 10999999890 &11111111, 111111110 & k=8\\
7 & +10890001089,108900010890 & 110000011, 1100000110 & $k_1=2,k=5$\\
 & 10989010989,109890109890 &111000111, 1110001110  & $k_1=3,k=3$\\
 & 10999999989, 109999999890 & 111111111, 1111111110 & k=9 \\
 & 108900001089,1089000010890 & 1100000011, 11000000110 & $k_1=2,k=6$\\
 & 109999999989,1099999999890 & 1111111111, 11111111110 & k=10 \\
 & 109890010989,1098900109890 & 1110000111, 11100001110 & $k_1=3,k=4$ \\
 & 109989109989,1099891099890 &1111001111, 11110011110 & $k_1=4,k=2$ \\
 & 1099890109989,10998901099890 & 11110001111, 111100011110 &$k_1=4,k=3$ \\
 & 1089000001089,10890000010890 &11000000011,110000000110 & $k_1=2,k=7$ \\
\hline
 & 108910891089,1089108910890 & 1100110011, 11001100110 & $k_1=2,k_2=2,k=2$\\
\hline
 & 108901098901089,1089010989010890 & 1100011100011,11000111000110 & $k_1=2,k_2=3,k=3$ \\
\hline
\end{tabular}
\end{center}
\caption{ \label{table2}
Integers in the limiting 2-cycle of IDR when the length
of the initial integers is 1-7. The ``+" sign means ``plus all 2-cycle
attractor integers already obtained  when the initial integer length is
lower". The third column shows the quotients when these 2-cycle
integers are divided by 99. The last column is an attempt to summarize
the pattern of the binary strings in the third column. The last two
rows show examples for constructing 2-cycle integers by
juxtaposition of repeated copies of previously known 2-cycle integers.
}
\end{table}

\newpage

\subsection{ Construction of an infinite number of $p$-cycles ($p > 2$) }

\indent

Two-cycles are not the only limiting attractors of IDR.
Other limiting cycles are relatively rare but exist. 
The simplest $p$-cycle
for $p>2$ we have observed is a 12-cycle listed in Table \ref{table3}(A). 
This 12-cycle can be found by starting
the iteration from a very small number, $D_0=$158. The integers in this 12-cycle
are listed in Table \ref{table3}(A). 

During the iteration of Eq.\ref{eq-map}, 
integers become 99-divisible, usually well before becoming cyclic,
after the sequence of integers passes through
one subtraction and one addition which led to a PWI.
Once the integer becomes 99-divisible, its subsequent integers by iteration
are also 99-divisible. We can argue in the following: if an integer is 9-divisible,
the sum of its digits is divisible by 9. This feature is preserved by digital
reversal, subtraction, and addition. Therefore, once an integer is 9-divisible,
it will continue to be 9-divisible after applying Eq.\ref{eq-map}. Similarly,
if an integer is 11-divisible, the sum of its digits with alternating signs,
$\sum_{i=0}^n (-1)^i a_i$ is divisible by 11. This feature also will not be
affected by digital reversal, subtraction, and addition. Combining the two,
the 99-divisibility is preserved by Eq.\ref{eq-map}.

However, 99-divisibility do not necessarily imply that these integers are PWI.
Among the 12-cycle elements in Table \ref{table3}(A), only 4 out of 12 
quotient after dividing 99 (``99-quotient") are binary sequences,
and only 3 of them are PWBS, confirmed by theorem \ref{theorem-hairpin}. 
There are no more common factors besides 99, though eleven out of 12 
integers are divisible by 11$\times$99. 

The next cycle length we are examining is $p=10$, 
shown in Table \ref{table4}.  Although it has a shorter
cycle length than 12, it was found later in our numerical runs as it requires
a much larger initial integer, which would not have been found if the initial
$D_0$ is small.
Of the 12 99-quotients
of 10-cycle elements, two are binary strings, and both are PWBS.

Moreover, the 10-cycle is not one attractor but a whole class
of them described by \\
109008910(9)$_L$890991089 ($L=0,1,2, \cdots$).
It is similar to the class of limiting 2-cycles described by 10(9)$_L$89.
We have verified for a lot of members of the family their limiting behavior, 
and the generality of the above statement may be shown the same way as
that in subsection 3.3.

\begin{table}[H]  
\begin{center}
(A) An example of a 12-cycle attractor \\
\begin{tabular}{c|r|r}
\hline
i & 12-cycle integer & divided by 99  \\
\hline
1 & 99099 & (PW) 1001= 7*11*13 \\
2 & 198198 & 2002 = 2*7*11*13  \\ 
3 & 1090089 & (PW) 11011 =7*11*11*13 \\ 
4 & 10890990 & (PW) 110010 =3*10*19*193  \\ 
5 & 981189 & 9911 =11*17*53\\ 
6 & 1962378 & 19822 =2*11*17*53 \\ 
7 & 10695069 & 108031 =7*11*23*61 \\ 
8 & 106754670 & 1078330 =10*11*9803 \\ 
9 & 30297069 & 306031 =11*43*647 \\ 
10 & 126376272 & 1276528 =11*16*7253 \\ 
11 & 399049893 & 4030807 =11*366437 \\
12 & 108900 & (not PW) 1100 =10*10*11 \\ 
\hline
13=1 & 99099 & 1001 =7*11*13\\
\hline
\end{tabular}

\vspace{0.4in}

(B) Another 12-cycle constructed by padding two 99099 separated by four zeros\\
\begin{tabular}{c|r|r}
\hline
i & 12-cycle integer & divided by 99   \\
\hline
1 & 99099000099099 & (PW) 1001000001001   \\
2 & 198198000198198 & 2002000002002     \\
3 & 1090089001090089 & (PW) 11011000011011    \\ 
4 & 10890990010890990 & (PW) 110010000110010   \\
5 & 981189000981189 &9911000009911    \\
6 & 1962378001962378 & 19822000019822   \\
7 & 10695069010695069 & 108031000108031    \\
8 & 106754670106754670 & 1078330001078330   \\
9 & 30297069030297069 & 306031000306031   \\
10 & 126376272126376272 & 1276528001276528    \\
11 & 399049893399049893 & 4030807004030807   \\
12 & 108900000108900 & (not PW) 1100000001100   \\
\hline
14=1 & 99099000099099 & (PW) 1001000001001   \\
\hline
\end{tabular}
\end{center}
\caption{ \label{table3}
(A) Integers in the simplest limiting 12-cycle of IDR. The third column
is the quotient by dividing these integers by 99. Of the four binary string
quotients, 3 are PWBS (see Table \ref{table1}) and 1 is not.
Further prime factorization of these 12 integers shows that besides 99, there
are no other common factors. 
(B) Construction of another 12-cycle by concatenating the simplest
12-cycle elements in (A). Each line in (B) corresponds to a line in (A) in that
it has two copies of the simpler element plus padding zeros in-between. 
}
\end{table}

\newpage

Concatenations of $p$-cycle elements by padding certain
number of zeros in-between, are also $p$-cycle elements. For example,
Table \ref{table3}(B)
shows that combining two 12-cycle elements 99099s (as seen in Table \ref{table3}(A))
 with four zeros between them,
99099 0000 99099, is a 12-cycle element. Comparing integers in Table \ref{table3}(A)
and (B), each row can find its correspondence in another table,
though the number of zero-padding might be different (e.g.,
399049893 in Table \ref{table3}(A) vs. 399049893 399049893 in Table \ref{table3}(B)).

The 10-cycle motif,
$M_L$= 1090089  10(9)$_L$89 0991089 ($L=0,1,2,\cdots$), 
can also be concatenated with  padding zeros in-between,
M$_L$ (0)$_K$ M$_L$ ($K=0,1,2, \cdots$), 
which will also be 10-cycles (result not shown). Similar to 2-cycles,
this concatenation can be generalized to other forms, as long as the overall
arrangement of $M_L$ is symmetric and there are enough number of zero spacers.
We propose the following proposition: 
\begin{proposition}
\label{prop-cat}
If $M$ is a $p$-cycle integer, we can use $M$ as a motif to construct
other $p$-cycles, such as $M'=M (0)_K M$,
$M'=M (0)_{K_1} M (0)_{K_2} M  \cdots  M \cdots M (0)_{K_2} M (0)_{K_1}M$,
$M'=M (0)_{K_1} M (0)_{K_2} M  \cdots  (0)_{K_m} \cdots M (0)_{K_2} M (0)_{K_1}M$, etc.
\end{proposition}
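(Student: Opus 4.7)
The plan is to show that, under mild size conditions on the zero paddings, a single IDR step applied to the concatenated integer $M'$ factorises as the same step applied independently to each embedded copy of $M$, with the zero blocks acting only as spacers. Iterating $p$ times then returns each copy to $M$ and hence $M'$ to itself.

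First I would compare $M'$ with $rev(M')$. Because the $M$-blocks are arranged palindromically and the flanking pattern of zero-lengths is symmetric, $rev(M')$ has the same block layout as $M'$ except that each occurrence of $M$ is replaced by $rev(M)$. The leading digit of $M'$ is thus the leading digit of $M$, and similarly for $rev(M')$ and $rev(M)$, so $M' \gtrless rev(M')$ holds exactly when $M \gtrless rev(M)$. Consequently, IDR selects the same branch (subtraction or addition) for $M'$ as it does for $M$.

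Second, setting $\widetilde M := M \pm rev(M)$ for the branch selected by the rule in Eq.\ref{eq-map}, column-wise addition or subtraction yields
\begin{equation}
M' \pm rev(M') = \widetilde M (0)_{K'_1} \widetilde M (0)_{K'_2} \cdots \widetilde M,
\end{equation}
provided each $K_i$ is large enough that a carry out of one block, or a borrow into the zeros abutting one block, cannot traverse the whole zero run and disturb the next block. In the addition case a terminal carry is absorbed by one zero in the neighbouring padding, shifting $K_i \mapsto K_i - 1$ and lengthening $\widetilde M$ by one; in the subtraction case an intermediate borrow converts a short string of leading zeros into $9$'s, or the leading digit vanishes and $K_i \mapsto K_i + 1$. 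By induction on the iteration count, after $p$ IDR steps each block has independently undergone the same $p$-step orbit as $M$ and returned to $M$; since the total digit-length is conserved and the block layout stays palindromic, the paddings likewise return to $K_1, K_2, \ldots$.

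The main obstacle is pinning down the precise lower bound on the $K_i$. The proposition as written imposes no explicit constraint, but the factorisation collapses if a carry or borrow ever sweeps across an entire zero-padding into the adjacent block, thereby coupling the orbits of the two blocks. A clean sufficient condition is that each $K_i$ exceed the maximum run of consecutive $0$'s or $9$'s that arises anywhere along the orbit of $M$ under IDR; for the 10-cycle and 12-cycle motifs exhibited in Table \ref{table3} this is a small constant, which is why the constructions there work with the advertised paddings. The odd-length arrangements with a central block $M$ require a parallel bookkeeping in which the two boundary operations on that central block meet themselves across a shared zero-padding, but the same carry--borrow accounting closes the argument.
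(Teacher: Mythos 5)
Your proposal takes essentially the same route as the paper, which only sketches an outline: the symmetric concatenation lets each copy of $M$ run its own $p$-step orbit while the zero paddings insulate the blocks from one another's carries and borrows. You go further than the paper by making explicit the needed lower bound on the $K_i$ (the paper states no constraint in the proposition and only says informally elsewhere that there must be ``enough number of zero spacers''), which is a genuine sharpening rather than a deviation in method.
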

Outline of a proof of Proposition \ref{prop-cat}: similar to the proof
of Proposition \ref{prop-v1}, since the concatenation is symmetrically arranged,
each of the motif $M$ will follow its own $p$-cycle dynamics, and the padding
zeros play the role of separating them. Therefore, $M'$ is also a $p$-cycle element.

\begin{table}[H]  
\begin{center}
An example of 10-cycle attractor \\
\begin{tabular}{c|r|r}
\hline
i & 10-cycle integer & divided by 99   \\
\hline
1 & 1090089109890991089 & (PW) 11011001110010011   \\
2 & 10892080098910791990 & 110021011100109010   \\
3 & 972378109902762189 & 9822001110128911  \\
4 & 1953645319804635468 & 19733791109137732   \\
5 & 10599009408940099059 & 107060701100405041  \\
6 & 105698014389430198560 & 1067656711004345440  \\
7 & 39806979406019302059 & 402090701070902041   \\
8 & 134827370466517262952 & 1361892630974921848  \\
9 & 394090086130590991383 & 3980707940713040317  \\
10 & 10890991098910900890 & (PW) 110010011100110110   *66449\\
\hline
11=1 & 1090089109890991089 & 11011001110010011   \\
\hline
\end{tabular}
\end{center}
\caption{ \label{table4}
Integers in one of the limiting 10-cycle of IDR. The third column
is the quotient by dividing these integers by 99. Both binary quotients,
are checked by Theorem \ref{theorem-hairpin} to be PWBS.
}
\end{table}

In all examples from Tables \ref{table3}-\ref{table4}, as well as our
numerical runs of Eq.\ref{eq-map}, there are always at least one
Papadakis-Webster integer in the limiting cycle. We propose the following conjectures: 
\begin{conjecture}
\label{conj-1PWI}
One of the integers in any $p$-cycle of IDR is a Papadakis-Webster integer.
\end{conjecture}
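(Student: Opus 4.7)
The plan is to exhibit, inside an arbitrary $p$-cycle of IDR, a subtraction step immediately followed by an addition step, and then use the template of Definition \ref{def-PWI} to identify the result as a Papadakis--Webster integer. First I would argue that the cyclic word $w \in \{S,A\}^p$ recording the step-types (S for the subtraction branch of Eq.(\ref{eq-map}), A for the addition branch) cannot be constant: a run of pure additions is strictly value-increasing and a run of pure subtractions is strictly value-decreasing, and neither kind of run can close into a cycle. Consequently $w$ contains the factor ``SA'' at some position, yielding three consecutive cycle integers with $D_{i+1} = D_i - rev(D_i)$ (and $D_i > rev(D_i)$) and $D_{i+2} = D_{i+1} + rev(D_{i+1})$ (with $rev(D_{i+1}) \ge D_{i+1}$).

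Second, in the favorable sub-case where the subtraction step is length-preserving, the triple $(D_i, D_{i+1}, D_{i+2})$ fits the template $(D,E,F)$ of Definition \ref{def-PWI}, and Proposition \ref{prop-borrow} together with Theorem \ref{theorem-99} at once certify $D_{i+2}$ as a PWI. This closes the conjecture for any cycle in which some ``SA'' factor happens to have a length-preserving subtraction.

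The main obstacle is the complementary sub-case, in which every ``SA'' factor in the cycle has a length-dropping subtraction. This is not a pathological contingency: in the 12-cycle of Table \ref{table3}(A) both ``SA'' factors (at $10890990 \to 981189 \to 1962378$ and at $108900 \to 99099 \to 198198$) lose digits in the subtraction, and the two $D_{i+2}$ values actually produced, $1962378$ and $198198$, are not PWIs. Nevertheless the cycle still harbors the PWIs $99099$, $1090089$, and $10890990$, which arise from length-preserving starting integers external to the cycle rather than from local ``SA'' pairs inside it. To cover this case one would need an auxiliary reconstruction lemma: whenever the subtraction $D_i - rev(D_i)$ shortens, produce a shorter $D'$ with $D' > rev(D')$, with $D' - rev(D')$ of the same length as $D'$, and with $D' - rev(D') = D_{i+1}$; then $D_{i+2}$ would be the PWI associated with the starting integer $D'$. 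Alternatively, one could try to locate a PWI elsewhere in the cycle by directly verifying the hairpin rule of Theorem \ref{theorem-hairpin} on some cycle element divisible by $99$ (all post-transient cycle elements are divisible by $99$). Turning either route into a uniform, cycle-independent statement is the hard part of the proof, and is presumably the reason Conjecture \ref{conj-1PWI} is stated as a conjecture rather than as a theorem.
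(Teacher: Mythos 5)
The statement you set out to prove is presented in the paper as a \emph{conjecture}, and the paper offers no proof of it --- only the same heuristic you give: any cycle must contain a subtraction step immediately followed by an addition step, so one ``would expect a PWI to emerge,'' except that the caveat of Appendix \ref{sec-caveat} (the subtraction may shorten the integer, in which case $E+rev(E)$ need not be a PWI) blocks the identification. Your analysis reproduces this reasoning faithfully: the observation that the cyclic word in $\{S,A\}^p$ cannot be constant, the reduction to an ``SA'' factor, the appeal to Definition \ref{def-PWI} and Theorem \ref{theorem-99} in the length-preserving case, and the diagnosis that the length-dropping case is the whole difficulty are all exactly the points the authors themselves make when explaining why the statement remains a conjecture. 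Your concrete check on the 12-cycle of Table \ref{table3}(A) is essentially right, though you list only two of its three ``SA'' factors --- you missed $106754670 \to 30297069 \to 126376272$, which also drops a digit and also fails to produce a PWI --- but that omission only reinforces your point that every ``SA'' factor in that cycle is of the bad type, while the cycle nevertheless contains PWIs ($99099$, $1090089$, $10890990$) arising at other positions.

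To be clear about the verdict: what you have written is not a proof, and you say so yourself. The two escape routes you sketch are both genuinely open: the ``reconstruction lemma'' would require manufacturing a shorter $D'$ with $D'-rev(D')=D_{i+1}$ and a length-preserving subtraction, and there is no reason such a $D'$ must exist; the alternative of verifying the hairpin rule of Theorem \ref{theorem-hairpin} on some cycle element presupposes that some cycle element has a \emph{binary} 99-quotient, which is itself unestablished for a general $p$-cycle (99-divisibility is preserved by the map, but binarity of the quotient is not). So there is a real gap in the argument --- but it is the same gap the paper leaves open, which is precisely why the statement is labelled a conjecture rather than a theorem. Relative to the paper, you have added nothing missing and lost nothing present; you have correctly located both the statement's status and the obstacle that keeps it there.
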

Examples include: 1089 ($p=2$), 1090089109890991089 ($p=10$), and 99099 ($p=12$).  In 
Tables \ref{table3}-\ref{table4}, we have also seen that not all $p$-cycle ($p >2$) integers
are Papadakis-Webster integers, unlike the situation for $p=2$ in Table \ref{table2}.
Consider the repeated applications of subtraction and addition
in Eq.\ref{eq-map}, we would expect that both reaching the cycle and
within the cycle, one subtraction would be followed by an addition in the next step.
Therefore, one would expect a PWI to emerge, and Conjecture \ref{conj-1PWI} may seem to
be natural. However, due to the caveat mentioned in Subsection \ref{sec-caveat},
it is not guaranteed that the condition in Definition \ref{def-PWI} is satisfied. 
While Eq.\ref{eq-map} always preserve 99-divisibility, it may not preserve PWI
membership.

Table \ref{tableA2} shows integers in a 71-cycle. We can use
these integers to check the above hypothesis. Indeed, out of 71 integers
in the limiting set, 12 of them after dividing 99 lead to binary strings.
Of these 12 binary strings, 8 are checked to PWBS by the
hairpin rule (Theorem \ref{theorem-hairpin}). We can also show that
(e.g.) 8820000289999602 (row 68 in Table \ref{tableA2}) can be used
to form another integer 8820000289999602 00000 8820000289999602, with
five padding zeros, is also part of a 71-cycle integer. Same conclusion
can be reached for three copies of the motif 8820000289999602:
8820000289999602 00000 8820000289999602 00000 8820000289999602 (result not shown). 

Although we have not observed other cycle length besides
2, 10, 12, and 71, we hypothesize that there are other cycle lengths:
\begin{conjecture}
Eq.\ref{eq-map} has limiting cycles with cycle length longer than 71.
\end{conjecture}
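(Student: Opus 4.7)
The plan is to combine systematic computational exploration with the structural machinery already established in the paper. First, I would extend the exhaustive enumeration of orbits of Eq.\ref{eq-map} into progressively larger digit-length regimes. The authors explicitly note that the 10-cycle eluded small-seed searches because it only appears for substantially larger starting integers; this suggests that as the number of digits grows, the orbit landscape admits qualitatively new limiting sets that are invisible below some (unknown) size threshold. I would therefore scan not just random seeds but also candidate ``motif integers'' built by analogy with the patterns already catalogued (the $10(9)_L89$ family for 2-cycles and the $1090089\,10(9)_L89\,0991089$ family for 10-cycles), perturbed by insertions and substitutions that preserve 99-divisibility. Every newly discovered cycle would first be tested against Proposition \ref{prop-cat} to make sure it is not a padded concatenation of a known shorter cycle --- only the irreducible ones count as genuinely new cycle lengths.

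Once a single cycle of length $p>71$ is observed empirically, I would try to lift the evidence to a proof by isolating a one-parameter family analogous to those of Propositions \ref{prop-v1} and the 10-cycle family $M_L$. Concretely, one looks for a repeated digit block (most naturally a string of 9's in the middle, as in all families found so far) whose length $L$ can be varied freely while leaving the orbit-closure length invariant. Once a candidate family is identified, the proof reduces to a finite symbolic verification analogous to Eqs.\ref{eq-prove-1089-1}--\ref{eq-prove-1089-2}: carry out each of the $p$ addition/subtraction steps of Eq.\ref{eq-map} on the symbolic template, tracking the borrow and carry sequences digit-by-digit, and verify that after $p$ steps the template returns to itself. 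Because 99-divisibility is preserved by IDR and because the choice between addition and subtraction at each step is dictated only by the comparison $D_i$ vs.\ $rev(D_i)$ (which in turn depends only on the leading and trailing digits of the template), this verification is a finite calculation independent of $L$.

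The hard part will be the search itself. The orbit space grows exponentially with digit length, and there is no a priori bound on how large a seed must be to land in a cycle of length $>71$; nor is there any obvious invariant (analogous to 99-divisibility) that discriminates among possible cycle lengths. A softer, complementary strategy would be a pigeonhole-style existence argument: every IDR orbit that is not diverging is eventually periodic, and the set of integers reachable from a seed of length $n+1$ is effectively bounded in digit length along the orbit. If one could show, for infinitely many $n$, that the union of the basins of all presently known cycles ($p=2,10,12,71$) fails to cover the full set of non-diverging orbits of seed length $n+1$, this would force the existence of at least one further cycle, whose length could then only be bounded below by ruling out the known short lengths on a case-by-case basis. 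Such an argument would give existence without explicit identification, which is exactly what the conjecture asks for; the main technical obstacle would be separating eventually-divergent orbits from eventually-periodic ones well enough to make the counting rigorous.
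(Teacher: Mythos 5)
The statement you are asked to establish is labelled a \emph{conjecture} in the paper: the authors offer no proof at all, only the empirical observation of cycles of lengths $2$, $10$, $12$, and $71$ and the expectation that the list does not stop there. So there is no ``paper's proof'' to match; the relevant question is whether your proposal actually closes the gap the authors left open. It does not. What you have written is a research programme --- a search strategy plus a description of how a verification would proceed \emph{if} a candidate cycle of length $p>71$ were found --- but at no point do you exhibit such a cycle, nor do you complete any argument that forces one to exist. The conjecture asserts existence; a proof must either produce a witness (an explicit integer whose orbit under Eq.~\ref{eq-map} closes after $p>71$ steps, verified by direct computation or by a symbolic template argument in the style of Eqs.~\ref{eq-prove-1089-1}--\ref{eq-prove-1089-2}) or give a nonconstructive existence argument. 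You supply neither.

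Two more specific problems. First, your filtering step via Proposition~\ref{prop-cat} is pointed in the wrong direction for this conjecture: that proposition manufactures infinitely many \emph{new cycles} but all of the \emph{same} period $p$ as the motif, so it can never produce a cycle length longer than $71$; it is a tool for multiplying known periods, not extending them. Second, your pigeonhole fallback rests on the claim that ``every IDR orbit that is not diverging is eventually periodic'' together with an effective digit-length bound along non-diverging orbits. The paper itself shows that diverging trajectories exist and gives no criterion for deciding, from a seed, whether its orbit diverges; without such a criterion the partition of seeds into ``eventually periodic'' and ``diverging'' is not computable or even provably exhaustive in any useful quantitative sense, so the counting argument cannot be made rigorous as stated. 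You correctly flag this as the main technical obstacle, but flagging an obstacle is not the same as overcoming it. As it stands, the proposal leaves the conjecture exactly as open as the paper does.
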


\subsection{Non-periodic plus diverging trajectories with infinite number of steps}

\indent

We first made a promise in the Introduction 
that when the 1089 trick is extended to any number of digits, or, 
to any number of steps by a subtraction-addition mixture of iterations, 
something similar to the 1089-magic would appear. Indeed, extending to any number of
digits would lead to a very ``privileged" set of Papadakis-Webster integers,
whose membership is limited. Similarly, when extending the number of
steps from 2 to infinity in IDR, the iteration often ends up
to a 2-cycle, whose members are more restricted, to a subset of 
Papadakis-Webster integers. Eq.\ref{eq-map} may also end up to
a $p$-cycle ($p>2$): we have observed $p=10,12, 71$ through extensive
numerical runs. The integers in a $p$-cycle are still restricted,
but less so that those in a 2-cycle. We have not fully characterized the
feature for integers in the set of all $p$-cycles. 

However, a new type of dynamical behavior appears
from IDR. This is the item no.4 listed in Section \ref{sec-map}:
the non-periodic (acyclic) dynamics. However, unlike the chaotic
dynamics in continuous nonlinear systems, the trajectories we have
observed are not wandering irrationally in the $\mathbb{N^+}$ space;
instead, they march to infinitely large integers in a regular fashion. 

Our  numerical run shows that the following integer would
lead to a diverging/non-periodic trajectory: 10(9)$_n$89(0)$_{n}$ with $n \ge 2$.
This set of integers look deceptively similar to 10(9)$_n$89 (with $n \ge 0$)
for the limiting 2-cycle integers. However there are two major differences:
the extra trailing zeros, and the longer padding of 0's between 10 and 89.
The trailing zeros, in particular, destroy the symmetry between 10 and 89,
and the dynamics is no longer 2-cyclic.

It can be shown that 
10(9)$_i$89(0)$_i$ will map to ($\rightarrow$) 
1(0)$_{i+1}$ 8 (9)$_{i+1}$ $\rightarrow$ 
10(9)$_{i}$ 89 (0)$_{i+1}$ $\rightarrow$ 
108(9)$_{i+1}$ 0 (9)$_{i}$ $\rightarrow$ 
10 (9)$_{i+2}$ 89 (0)$_i$ $\rightarrow$ 
1(0)$_{i+1}$ 998 (9)$_{i+1}$  	$\rightarrow$ 
10(9)$_{i+2}$ 89(0)$_{i+1}$  $\rightarrow$ 
109 (0)$_i$ 9890 (9)$_i$, and finally, maps to 
10(9)$_{i+2}$89(0)$_{i+2}$ (9). 
The above representation of these integers
do not highlight the symmetry hidden in the sequence. For that, we propose the following conjecture.

\begin{conjecture}
\label{con-8}
There are an infinite number of diverging/non-periodic trajectoryies of IDR
where the middle digit(s) follow a 8-cycle rhythm: 98, 08, 8, 9, 99, 99, 9, and 9. 
\end{conjecture}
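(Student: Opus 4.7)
The plan is to establish the conjecture by directly verifying that the family of initial integers $D_0 = 10(9)_n 89 (0)_n$ for $n \ge 2$, whose eight-step trajectory is laid out in the paragraph immediately preceding the statement, yields infinitely many distinct diverging orbits of IDR, each one exhibiting the asserted middle-digit rhythm. The crucial observation is that step~$8$ of that chain returns to the same structural shape $10(9)_{n'} 89 (0)_{n'}$ with $n' = n+2$, so iterating the eight-step block sends both the $9$-run and the $0$-run to infinity, and the pattern of middle digits becomes genuinely periodic with period eight.

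The first concrete task is to verify each of the eight transitions as an instance of Eq.~(\ref{eq-map}). For every candidate integer $D$ in the list one must (i) decide whether $rev(D) < D$ or $rev(D) \ge D$, which selects the subtraction or addition branch of Eq.~(\ref{eq-map}), and (ii) perform the arithmetic symbolically on the digit string and match it against the next listed integer. Because each integer on the chain consists of one long $9$-block, one long $0$-block, and short constant prefixes and suffixes such as $10$, $89$, $08$, $998$, the borrow/carry pattern stabilizes inside the runs and the verification reduces to a finite, $n$-independent check of the first and last few columns, valid uniformly for all $n \ge 2$. Once the eight integers are confirmed, reading off their middle one or two digits gives exactly the sequence $98,\,08,\,8,\,9,\,99,\,99,\,9,\,9$ claimed by the conjecture.

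Divergence then follows at once: after every eight iterations the number of digits strictly increases (from $2n+4$ to $2n+8$), so $\{D_i\}$ cannot be eventually periodic. Infinitely many distinct such trajectories are obtained immediately by letting $n$ range over $\{2,3,4,\dots\}$, since trajectories are identified with their initial conditions. Further diverging families can be built in the spirit of Proposition~\ref{prop-cat} by symmetric zero-padding, for example $D_0 = 10(9)_n 89 (0)_n (0)_K 10(9)_n 89 (0)_n$, and an analogous block-wise computation should show that the two copies evolve independently, each carrying the same $8$-cycle signature in its own middle region.

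The main obstacle is arithmetic bookkeeping rather than anything conceptual. Two subtleties deserve attention. First, for the smallest index $n=2$ several of the symbolic blocks collapse to length two and abut the constant boundary pieces, so one must verify that no extra borrow or carry leaks into the boundary ``$10$'' or ``$89$'' in a way that would violate the stated shape; this is precisely where the hypothesis $n\ge 2$ is used. Second, at the transitions where the digit count changes one must confirm that the leading digit remains nonzero and that the length change agrees with the symbolic derivation, so that the next step of Eq.~(\ref{eq-map}) is applied to an integer of the claimed form. Once these boundary cases are checked, the eight-step block is self-sustaining under iteration, which simultaneously proves non-periodicity and the stated $8$-cycle rhythm in the middle digits.
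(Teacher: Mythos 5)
Your plan follows exactly the route the paper itself sketches: the eight-step symbolic chain $10(9)_{i}89(0)_{i}\rightarrow 1(0)_{i+1}8(9)_{i+1}\rightarrow\cdots\rightarrow 10(9)_{i+2}89(0)_{i+2}$ given just before the statement. Note, however, that the paper deliberately leaves this as a \emph{conjecture} and offers only that sketch plus the sixteen iterates from $10998900$ as evidence, so there is no complete proof in the paper to compare against; your proposal, if the uniform-in-$n$ borrow/carry bookkeeping were actually carried out, would upgrade the sketch to a proof of the existence part. The chain itself is consistent (I checked it numerically for $i=2$, and the middle-digit readout $98,08,8,9,99,99,9,9$ follows from the digit counts $2i+4,2i+4,2i+5,2i+5,2i+6,2i+6,2i+7,2i+7$), so the only substantive caveat is your claim of \emph{infinitely many distinct} trajectories: since the eight-step block sends $n$ to $n+2$, the initial conditions $10(9)_{n}89(0)_{n}$ for all even $n$ (respectively all odd $n$) lie on a single forward orbit, so under the natural ``distinct orbits'' reading this family supplies only two diverging trajectories. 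To honestly obtain infinitely many you must lean on the zero-padded constructions you mention in passing (or on other seeds), and those are neither verified in your proposal nor claimed for diverging trajectories in the paper (Proposition~\ref{prop-cat} covers only $p$-cycles); that verification should be made explicit rather than asserted.
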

Note that a cyclic pattern in the middle digits do not necessarily imply that
the integers themselves are cyclic. In fact, the flanking digits (both left and right)
change and increase in length during the iteration for diverging trajectories.
To illustrate Conjecture \ref{con-8}, we rewrite the first 16 integers starting
from 10998900: 109\underline{98}900, 100\underline{08}999, 1099\underline{8}9000, 1089\underline{9}9099,
1099\underline{99}8900, 1000\underline{99}8999, 10999\underline{9}89000, 10900\underline{9}89099, and 
10999\underline{98}90000, 10990\underline{08}90099, 109999\underline{8}900000,
109989\underline{9}900099, 109999\underline{99}890000, 109900\underline{99}890099, 1099999\underline{9}8900000 , 1099900\underline{9}8900099.

\section{Discussion}

\indent

While the extension of 1089 trick to any number of digits
is straightforward, the extension to any number of steps need more discussion. Our
Eq.\ref{eq-map} aims at applying both subtraction and addition of an integer
$D$ and its digital reversal $rev(D)$. However, Eq.\ref{eq-map} is not unique.
For example, suppose we change the condition for applying subtraction
from $rev(D_i) < D_i$ to $rev(D_i) \le D_i$:
\begin{eqnarray}
\label{eq-map2}
D_{i+1} =
\left\{
\begin{matrix}
D_i - rev(D_i) & \mbox{if $rev(D_i) \le D_i$} \\
D_i + rev(D_i) & \mbox{if $rev(D_i) > D_i$ } \\
\end{matrix}
\right.
\end{eqnarray}
a palindromic $D_i$ will map to $D_{i+1}=0$ which would be a fixed point. 
Eq.\ref{eq-map2} will not
generate anything that Eq.\ref{eq-map} can not generate, whereas
it would have more 0-fixed-points. Therefore, it is clear that
Eq.\ref{eq-map} has more complex behaviors
than Eq.\ref{eq-map2}.

Eq.\ref{eq-map} provides a paradigm for generating multiple types of dynamics
using a simple rule. The only factor that determine the limiting dynamics
is the initial integer. An integer may have a particular digit borrowing
(for subtracting its reverse, see Eq.\ref{eq-borrow}) and digit carryover 
(for adding its reverse, see Eq.\ref{eq-carry}) pattern, and 
how these two binary sequences determine the limiting 
dynamical behavior is far from obvious.  For a traditional  dynamical system,
the simplest situation is the fixed points, where the eigenvector/eigenvalues 
of the (transpose of) transition matrix could provide a complete solution 
(e.g., \citep{wli-meta}).
In a similar way,
the simplest situation for Eq.\ref{eq-map} is 2-cycles, which we also
know a great deal, aided by Proposition \ref{prop-tenD}.

Adding or subtracting an integer's digital reverse
from the integer itself, especially if done in multiple or repetitive ways,
is an operation leading to integers obeying to very specific
structural restrictions. These include the digital composition,
the appearance of specific digital motifs, as well as 
the divisibility properties of the outcome.
Therefore, the resulting integers, either after only two steps or after 
a very large
number of steps, could fall into a very small set in the $\mathbb{N^+}$ space. 
Indeed, the proportion of integers that are PWI decreases exponentially
with the number of digits. The limiting 2-cycle elements of IDR are
a subset of PWI, whose number is even fewer. Even though there is an 
infinite number of elements in a diverging trajectory, percentage-wise
these still occupy a very small set since there are specific signature
pattern in these integer sequences. 

PWIs and cycle elements of IDR should not be confused
with each other, even though the two are related. If we take all PWIs
from Table \ref{table1} and \ref{tableA1} as initial conditions for IDR,
there are several possible outcomes. If the PWI is
of the form 10(9)$_L$89 or 10(9)$_L$890, after two iterations, they are
mapped to themselves, i.e., these are limiting 2-cycle elements of IDR.
Otherwise, they can be mapped to other PWIs to form a 2-cycle, or
other 12-cycles, $p$-cycles, or diverging trajectories. In other words,
PWIs, being the end result after two steps in non-caveat situations, 
are not necessarily the end result after being iterated infinite number of steps.

Corollary \ref{cor-palin} states that if $D_I$ is the
smaller element in a limiting 2-cycle set of IDR, then $rev(D_I)/D_I=9$,
i.e., a 9-palintiple. 
This connection between palintiples 
\citep{hardy,sutcliffe,beech,pudwell,holt14,kendrick,holt16} and 2-cycle of IDR provides 
another path in discovering sequence pattern of the integers in the limiting set. 
In fact, the motif 10(9)$_L$89 and its extensions  revealed in \citep{hoey,webster12}
are exactly what we observed in numerical experiment of IDR.

Although all the results in this article concern 
decimal integers (i.e., with base 10), similar results can be 
obtained for non-decimal bases. For example, as already pointed out
by Webster \citep{webster}, the divisibility by 99 for decimal PWI
will become divisibility (b+1)(b-1) (where b is the base) for non-decimal PWIs.
For example, if b=2, non-decimal PWIs are always multiples of 3, if b=8,
they are multiples of 63, etc. New cycle lengths have also been
observed for  non-decimal IDR (results not shown). More
results for 1089 trick and IDR on base-b numerical system are included
in Appendix A.4.

Many mathematical programs or calculators can not
carry out arithmetics correctly for very larger integers. We
have observed, for example, that the R software environment
(https://www.r-project.org/) could produce incorrect output
for simple division of a very large number. To aid such
large number arithmetics, we provide a FORTRAN source code
IDR1f.f90, detailed guidelines, and its executable in Windows environment at: 
{\sl https://shorturl.at/PkwJ6 }, 
{\sl https://shorturl.at/0Qje5 }, 
{\sl https://shorturl.at/G1g1y }.

Finally, since a large number of results are presented here, we 
summarize the major ones in Table \ref{table5}.

\begin{table}[H]  
\begin{center}
A summary of the results presented in this article \\
\begin{tabular}{c|c|c}
\hline
situation & name & description \\
\hline
two-step & PWI & multiples of 99 (=PWBS) (theorem 2.2) \\
& PWBS & reverse of the digit borrowing sequence (Eq.5)  \\
& PWBS & hairpin pairing rule (theorem 2.7, Fig.1)  \\
\hline
two-step with caveat & & no conclusion (appendix A.1) \\
\hline
any steps (IDR) & 2-cycle & basic motif 10(9)$_L$89 ($L\ge 0$) (prop 3.4, 3.5)\\
& 2-cycle & symmetric arrangements of basic motif (prop. 3.7)\\
& 2-cycle /99 & symmetric arrangements of 1- and 0-blocks (length $\ge$2)(prop 3.8) \\
& p-cycle  & p=10,12, 71 \\
& 10-cycle & a motif element: 1090089 10(9)$_L$89 0991089 ($L \ge 1$) (sec 3.4) \\
& p-cycle  & symmetric arrangements of p-cycle motifs with padding zeros (prop 3.9) \\
& diverging & 8-cycle rhythm in the middle: 98,08,8,9,99,99,9,9  (conj 3.12)\\
\hline
\end{tabular}
\end{center}
\caption{ \label{table5}
A summary of results presented in this article.
}
\end{table}

{\bf Acknowledgement:}
We would like to thank Dr. Astero Provata for helpful discussions
and reading the manuscript. YA thanks Dr. Marianna Vasilakaki for computer
assistance, and Ms. Maria Metaxa for sharing a copy of C. Papadakis' monograph. 
WL thanks Oliver Clay for helpful comments on the draft.
{\sl
YA would like to dedicate
this work to the memory of Michael Rossos -- a friend, an old gentleman,
a polymath with a special interest in linguistics and ancient religions.
It was him, after reading a mathematics popularization article about Papadakis' work,
who started to experiment with numbers adding or subtracting them after
reversion, conditioning on their relative values. He was already amazed by
how often number 1089 appeared in the output. Through his playful
creative activity, he was the true inventor of IDR almost 30 years ago. }

\newpage

\normalsize

\section*{Appendices}

\appendix

\renewcommand{\thesection}{A.\arabic{section}}

\setcounter{equation}{0}
\renewcommand{\theequation}{\thesection.\arabic{equation}}

\setcounter{table}{0}
\renewcommand{\thetable}{A.\arabic{table}}

\large

\section{A caveat in applying the 1089 trick} \label{sec-caveat}

\indent

While the condition $a_n > a_0+1$ in 1089 trick is 
sufficient  to ensure that
the integer after the subtraction step, $E$, has the same number
of digits as the starting integer $D$, the condition $a_n=a_0+1$
is not sufficient.
Let's illustrate this point by the following example:
if $D=4193$,  where $a_n$ is equal to  $a_0+1$ 
E=4193-3914=279 has one less digit than $D$. If we treat
the leading 0 as a space-holding digit, $F=0279+9720=9999$ is indeed a Papadakis-Webster
integer. However, most people would consider rev(279)=972, then  $F=279+972=1351$ is
no longer a Papadakis-Webster integer.

It can be seen from Eq.\ref{eq-E}  that $E$ may lose
the leading digit when $a_n-a_0 -b_{(n-1)}$=0 
(if we exclude the situation of $a_n=a_0$, then $a_n = a_0+1$).
Then the upper limit of the summation in Eq.\ref{eq-F} 
is changed from $n$ to $n-1$, and we have:
\begin{eqnarray}
F &=& E+rev(E) \nonumber \\
 &=& \sum_{i=0}^{n-1} (a_i - a_{n-i} +10 b_i - b_{(i-1)}) 10^i
+ \sum_{i=0}^{n-1} (a_{n-i-1} - a_{i+1} +10 b_{(n-i-1)} - b_{(n-i-2)}) 10^i \nonumber 
\end{eqnarray}
Since no $a_i$ terms are canceled, $F$ keeps more information about the original
sequence $\{ a_i \}$. The property of  uniqueness is lost, and there is no 1089 trick.

\section{Papadakis-Webster integers from the initial integers of length 8 and 9}

\begin{table}[H] \footnotesize
\begin{center}
All PWIs when the initial integer is between 10 millions and 1 billion  \\
\begin{tabular}{c|rc|c}
\hline
n+1 &  PWI & PWBS & not allowed binary strings\\
\hline
8 &99000099 &1000001   & 1000000, 1000010, 1000100, 1000101,  \\
 &  99001089  &1000011 & 1000110, 1001000, 1001010, 1001100,  \\
 &  99010989  &1000111 & 1001101, 1001110, 1001111, 1010000,  \\
 &  99099099  &1001001 & 1010001, 1010010, 1010011, 1010100,  \\
 &  99100089  &1001011 & 1010110, 1010111, 1011000, 1011010,  \\
 &  99999999  &1010101 & 1011011, 1011100, 1011110, 1011111,  \\
 &  100089099 &1011001 & 1100000, 1100001, 1100100, 1100101,  \\
 &  100098999 &1011101 & 1100111, 1101001, 1101100, 1101101,  \\
 &  108900990 &1100010 & 1101110, 1101111, 1110000, 1110001,  \\
 &  108901089 &1100011 & 1110010, 1110011, 1110101, 1111001,  \\
 &  108910890 &1100110 & 1111010, 1111011, 1111101\\
 &  108910989 &1100111 & \\
 &  108999990 &1101010 & \\
 &  109000089 &1101011 & \\
 &  109899900 &1110100 & \\
 &  109900890 &1110110 &\\
 &  109900989 &1110111 &\\
 &  109989000 &1111000 & \\
 &  109998900 &1111100 &\\
 &  109999890 &1111110 & \\
 &  109999989 &1111111 & \\
\hline
9 & 990000099 & 10000001 & 
10000000,10000010,10000100,10000101,10000110, \\
 & 990001089  & 10000011 &  10001000,10001001,10001010,10001011,10001100, \\
 & 990010989  & 10000111 &  10001101,10001110,10001111,10010000,10010001, \\
 & 991089099  & 10011001 &  10010010,10010011,10010100,10010101,10010110, \\
 & 991090089  & 10011011 &  10010111,10011000,10011010,10011100,10011101, \\
 & 999909999  & 10100101 &  10011110,10011111,10100000,10100001,10100010, \\
 & 1000989099 & 10111001 &  10100011,10100100,10100110,10100111,10101000, \\
 & 1000998999 & 10111101 &  10101001,10101010,10101011,10101100,10101101, \\
 & 1089000990 & 11000010 &  10101110,10101111,10110000,10110001,10110010, \\
 & 1089001089 & 11000011 & 10110011,10110100,10110101,10110110,10110111, \\
 & 1089010890 & 11000110 &  10111000,10111010,10111011,10111100,10111110, \\
 & 1089010989 & 11000111 &  10111111,11000000,11000001,11000100,11000101, \\
 & 1090089990 & 11011010 &  11001000,11001001,11001010,11001011,11001100, \\
 & 1090090089 & 11011011 &  11001101,11001110,11001111,11010000,11010001, \\
 & 1098909900 & 11100100 &  11010010,11010011,11010100,11010101,11010110, \\
 & 1098910890 & 11100110 &  11010111,11011000,11011001,11011100,11011101, \\
 & 1098910989 & 11100111 &  11011110,11011111,11100000,11100001,11100010, \\
 & 1099989000 & 11111000 &  11100011,11100101,11101000,11101001,11101010, \\
 & 1099998900 & 11111100 &  11101011,11101100,11101101,11101110,11101111, \\
 & 1099999890 & 11111110 &  11110000,11110001,11110010,11110011,11110100, 11110101, \\
 & 1099999989 & 11111111 &  11110110,11110111,11111001,11111010,11111011,11111101 \\
\hline
\end{tabular}
\end{center}
\caption{\label{tableA1}
Extension of Table \ref{table1}: Papadakis-Webster integers (PWI)
when the initial integers that start the two-step operation have 8 or 9 digits.
The 99-quotient of the PWIs (Papadakis-Webster binary string (PWBS)) are also
listed. The last column lists the binary strings that are not PWBSs. 
}
\end{table}

\large

\section{Integers in a limiting 71-cycle}

\begin{table}[H]  \tiny
\begin{center}
An example of a 71-cycle attractor \\
\begin{tabular}{c|rr}
\hline
i & 71-cycle integers & divided by 99\\
\hline
1 &          9999010009899999 & (not PW)    101000101110101 \\ 
2 &         19998999010009998 &             202010091010202 \\ 
3 &        109989000109999989 &     (PW)    1111000001111111 \\ 
4 &       1099988901110989890 &           11110999001121110 \\ 
5 &        110098790012089989 &            1112108990021111 \\ 
6 &       1100079000109980000 &           11111909092020000 \\ 
7 &       1099179990100279989 &           11102828182831111 \\ 
8 &      10998900001099999890 &     (PW)  111100000011111110 \\ 
9 &       1098900991099009989 &  (not PW) 11100010011101111 \\ 
10 &      10997910893089108890 &          111090009021102110 \\ 
11 &       1117712853287128989 &           11290028821082111 \\ 
12 &      11015930676869306100 &          111272027039083900 \\ 
13 &      10855533809265355089 &          109651856659246011 \\ 
14 &     108910890100098910890 &         1100110001011100110 \\ 
15 &      10891000099000891089 &          110010102010110011 \\ 
16 &     108910800198000910890 &         1100109092909100110 \\ 
17 &      10891799306992891089 &          110018174818110011 \\ 
18 &     108911629267392610890 &         1100117467347400110 \\ 
19 &      10895335504466491089 &          110053893984510011 \\ 
20 &     108914801945019850890 &         1100149514596160110 \\ 
21 &      10855891395911431089 &          109655468645570011 \\ 
22 &     108869303355231286890 &         1099689932881124110 \\ 
23 &      10187170801927318089 &          102900715170983011 \\ 
24 &     108268543712734496190 &         1093621653663984810 \\ 
25 &      16574106495388633389 &          167415217125137711 \\ 
26 &     114907794854848780950 &         1160684796513624050 \\ 
27 &      55819946396351071539 &          563837842387384561 \\ 
28 &     149336961765716063394 &         1508454159249657206 \\ 
29 &     642697579332885697335 &         6491894740736219165 \\ 
30 &     108900991098909901089 &   (PW)  1100010011100100011 \\ 
31 &    1089010900989108910890 &   (PW)  11000110111001100110 \\ 
32 &     108812881099018801089 &         1099120011101200011 \\ 
33 &    1088921692089207019890 &        10999209011002091110 \\ 
34 &      99814662286245721089 &         1008228911982280011 \\ 
35 &       1801908018019079190 &           18201091091101810 \\ 
36 &        882198909910988109 &            8911100100110991 \\ 
37 &       1784087929820879397 &           18021090200210903 \\ 
38 &       9723868219118684268 &           98220891102208932 \\ 
39 &       1099000099990000989 &    (PW)   11101011111010111 \\ 
40 &      10989001099890010890 &   (PW)    111000011110000110 \\ 
41 &       1187991200879911989 &           11999911119999111 \\ 
42 &      11079190980901909800 &          111911020009110200 \\ 
43 &      10188280071992712789 &          102911919919118311 \\ 
44 &     108910009989001000890 &   (PW)   1100101111000010110 \\ 
45 &      10909908999100981089 &          110201101001020011 \\ 
46 &     108928809199081971990 &         1100291002010929010 \\ 
47 &       9749628207173142189 &           98481093001748911 \\ 
48 &      19562041924201411668 &          197596383072741532 \\ 
49 &     106173452167115438259 &         1072459112799145841 \\ 
50 &    1059007963928369809860 &        10697050140690604140 \\ 
51 &     369918325634672800359 &         3736548743784573741 \\ 
52 &    1322926602071196620322 &        13362894970416127478 \\ 
53 &    3553193513773262912553 &        35890843573467302147 \\ 
54 &       1000890000108999000 & (not PW)  10110000001101000 \\ 
55 &        990891990108018999 &           10009010001091101 \\ 
56 &       1990702791207217098 &           20108109002093102 \\ 
57 &      10897829813179288089 &          110079089022013011 \\ 
58 &     108986126945072167890 &         1100869969142143110 \\ 
59 &      10224856395450478089 &          103281377731823011 \\ 
60 &     108312261854816320290 &         1094063251058750710 \\ 
61 &      16288643396654106489 &          164531751481354611 \\ 
62 &     114748789065988794750 &         1159078677434230250 \\ 
63 &      57250899505000947339 &          578291914191928761 \\ 
64 &     150625799555600752614 &         1521472722783845986 \\ 
65 &     566882806111598278665 &         5726088950622204835 \\ 
66 &          9910999989990000 &  (not PW)   100111111010000 \\ 
67 &          9910000089999801 &             100101011010099 \\ 
68 &          8820000289999602 &              89090912020198 \\ 
69 &          6750000469999314 &              68181822929286 \\ 
70 &          2610000829998738 &              26363644747462 \\ 
71 &         10989000109998900 &  (PW)       111000001111100 \\ 
\hline
72=1 &        9999010009899999 &  (not PW)   101000101110101 \\ 
\hline
\end{tabular}
\end{center}
\caption{ \label{tableA2}
Elements in a 71-cycle of IDR. The 99-quotients
of the integers on the left column is listed in the right column.
If a 99-quotient is a binary sequence, we further checked if it
is a PWBS or not by the hairpin pairing rule (Theorem \ref{theorem-hairpin} 
and Fig.\ref{fig1}).
}
\end{table}

\large

\section{IDR beyond the decimal system}

\indent

A base-b length-n integer is defined as $D= \sum_{i=0}^n a_i b^i=(a_0a_1a_2 \cdots a_n)$,
where $b > 1$ is a positive integer, and $a_i \in (0,1,2, \cdots, b-2, b-1)$. The digital
reverse of $D$ is defined as before: $rev(D)=(a_n a_{n-1} \cdots a_2a_1a_0)$.
The Papadakis-Webster integers (PWI) in base-b system are defined the same:
E $\equiv$ D-rev(D), PWI $\equiv$  E+rev(E).
The mapping of Eq.\ref{eq-map} (IDR) is also defined the same as before.
Here, we describe some results concerning 1089 trick and IDR, 
generalized from decimal system to any base-b systems, omitting proofs.
We restrict ourselves to systems with $b > 2$, as the 
binary system ($b=2$) presents certain specific peculiarities.

\begin{itemize}
\item
The concept of divisibility for base-b integers is based on nodulo arithmetic
\citep{uspensky}.
Theorem \ref{theorem-99} that PWI is divisible by 99 for decimal system
is generalized to base-b system as: base-b PWI is divisible by (b-1)(b+1),
and the quotient is a binary string. Therefore, the concept of
Papadakis-Webster binary string (PWBS) remains to be true.

\item
Proposition \ref{prop-tenD} and corollary \ref{cor-palin} is generalized to:
for base-b integers in a limiting 2-cycle of IDR, 
$D_{I+1}= b D_I$, and $rev(D_I)/D_I=b-1$. 

\item
The propositions \ref{prop-v1} and \ref{prop-v1-only}, concerning 10(9)$_L$89
($L \ge 0$) motifs and its expansions as 2-cycle elements for IDR, 
can be generalized to 10[b-1]$_L$[b-2][b-1] ($L \ge 0$, where [b-1] and [b-2] 
are the symbols presenting values b-1 and b-2) motifs, as
2-cycle elements for base-b integers.
For example, for base-8 integers, the foundamental motif is 10(7)$_L$67;
for base-13 integers (where the digits are 0,1,2, $\cdots$, 9, a,b,c),
the motif is 10(c)$_L$bc.

\item
For limiting $p$-cycle of IDR, same cycle length may also appear
in other base-b systems with a similar pattern. See Table \ref{table-A3},
for example of a  $p=12$ cycle in base-8 (octal) and base-9 (nonal) integers,
as compared to the correponding decimal system.

\item
There are also $p$-cycles in base-b integer systems that do not have
a correspondence in the decimal system. Table \ref{table-A4} shows
such an example of a 41-cycle in base-4 (quaternary) system.

\item
Similar to Conjecture \ref{con-8} that diverging trajectories of IDR tend to have a
8-cycle rhythm in the middle section for the decimal system, there is
also a ubiquitous 8-cycle rhythm in the middle digits in base-b numerical systems:
[b-1][b-2], 0[b-2], [b-2], [b-1], [b-1][b-1], [b-1][b-1], [b-1],and [b-1].

\end{itemize}

\begin{table}[H]  
\begin{center}
An example of 12-cycle of IDR in base-8 and base-9 integers as compared to the decimal system\\
\begin{tabular}{c|rr|r}
\hline
i & b=8 & b=9 & b=10\\
\hline
1 &          77077   & 88088 & 99099 \\
2 & 176176      & 187187 & 198198 \\
3 & 1070067      & 1080078 & 1090089  \\   
4 & 10670770      & 10780880     & 10890990 \\    
5 & 761167      & 871178     & 981189  \\   
6 & 1742356      & 1852367     & 1962378  \\   
7 & 10475047      & 10585058     & 10695069   \\ 
8 & 104554450      & 105654560     & 106754670 \\    
9 & 30077047      & 30187058     & 30297069 \\    
10 & 124176052      & 125276162     & 126376272 \\    
11 & 375067473      & 387058683     & 399049893 \\    
12 & 106700      & 107800     & 108900  \\  
\hline
13=1 &          77077   & 88088 & 99099 \\
\hline
\end{tabular}
\end{center}
\caption{ \label{table-A3}
An IDR limiting 12-cycle in base-8 (octal) and base-9 (nonal)
numerical system that has a correspondence in the base-10 (decimal) system.
}
\end{table}

\begin{table}[H]  \small
\begin{center}
An example of 41-cycle of IDR in the base-4 system\\
\begin{tabular}{c|rr|r}
\hline
i & 41-cycle elements (b=4)\\
\hline
1 & 13333032313332  \\
2 & 103331022013323   \\
3 & 1033301302213230  \\
4 & 110113211113323  \\
5 & 1100030330031000  \\
8 & 1032123333130323  \\
7 & 10323103333003230 \\
8 & 1033010002210323  \\
9 & 10323132002320230  \\
10 & 1120211313121323  \\
11 & 11012031110302200  \\
12 & 10131123331221123  \\
13 & 102310003330000230  \\
14 & 10303310023321023  \\
15 & 102322302031311330  \\
16 &  3203111222022123 \\
17 & 13021320103201212 \\
18 & 100232210212113303 \\
19 & 1010210022231011310 \\
20 & 213102100030231203 \\
21 & 1121300130032033121 \\
22 &  3001203031002130332 \\
23 &  10231023033103323 \\
24 & 103221222131123130 \\
25 &  11300030303000223 \\
26 &  110100121212001200 \\
27 &  101333303031000123 \\
28 &  1023000100000333230 \\
29 &  33010033330330023 \\
30 &  1000123331322330 \\
31 & 1231332112323 \\
32 & 11130111110310 \\
33 &  3223000001133 \\
34 &  13200000011022 \\
35 & 101211000011313 \\
36 & 1020321000130020 \\
37 & 220010332233213 \\
38 & 1133003231303301 \\
39 & 33311302233330 \\
40 & 23312021321331 \\
41 & 3333103233333 \\
\hline
42=1 & 13333032313332  \\
\hline
\end{tabular}
\end{center}
\caption{ \label{table-A4}
A base-4 (quaternary) limiting 41-cycle of IDR that does not have a correspondence in decimal system.
}
\end{table}

\end{document}